
\documentclass[10pt,journal,compsoc]{IEEEtran}
\bibliographystyle{IEEEtran}

\usepackage{graphics} 
\usepackage{epsfig} 
\usepackage{mathptmx} 
\usepackage{times} 
\usepackage{amsmath} 
\usepackage{amssymb}  
\usepackage{url}
\usepackage{makecell}
\usepackage{amsthm}
\usepackage{algorithm}  
\usepackage{algorithmicx}
\usepackage[noend]{algpseudocode}
\usepackage{multirow}
\usepackage{array}

%
\ifCLASSOPTIONcompsoc
  \usepackage[nocompress]{cite}
\else
  \usepackage{cite}
\fi

\ifCLASSINFOpdf
\else
\fi


\begin{document}
\newtheorem{definition}{Definition}
\newtheorem{theorem}{Theorem}
\newtheorem{lemma}{Lemma}
\newtheorem{case}{Case}
\renewcommand\arraystretch{1.2}
%
\title{Watermarking in Secure Federated Learning: A Verification Framework Based on Client-Side Backdooring}

\author{Wenyuan~Yang, Shuo~Shao, Yue~Yang, Xiyao~Liu,~\IEEEmembership{Member,~IEEE,} Ximeng~Liu,~\IEEEmembership{Senior Member,~IEEE,} Zhihua~Xia,~\IEEEmembership{Member,~IEEE,} Gerald~Schaefer,~\IEEEmembership{Member,~IEEE,} and~Hui~Fang,~\IEEEmembership{Member,~IEEE,}%
\IEEEcompsocitemizethanks{
\IEEEcompsocthanksitem W. Yang is with the School of Electronics Engineering and Computer Science, Peking University, China. Email address: {\tt ylrftx@gmail.com}%
 \IEEEcompsocthanksitem S. Shao, Y. Yang and X. Liu are with the School of Computer Science and Engineering, Central South University, Changsha 410083, China, Email address: {\tt shaoshuo\_ss@outlook.com, yangy0101@outlook.com, lxyzoewx@csu.edu.cn}%
\IEEEcompsocthanksitem X. Liu is with the College of Computer and Data Science, Fuzhou University, Fuzhou 350116, China, E-mail address: {\tt snbnix@gmail.com}%
\IEEEcompsocthanksitem Z. Xia is with the College of Cyber Security, Jinan University, Guangzhou, 510632, China, E-mail address: {\tt xia\_zhihua@163.com}%
\IEEEcompsocthanksitem G. Schaefer and H. Fang are with the Department of Computer Science, Loughborough University, Loughborough LE11 3TU, UK. E-mail address: {\tt h.fang@lboro.ac.uk}%
\IEEEcompsocthanksitem Xiyao Liu is the corresponding author.}%
}

%
%

\markboth{}%
{Shell \MakeLowercase{\textit{et al.}}: Bare Demo for Computer Society Journals}
%



\IEEEtitleabstractindextext{%

\begin{abstract}
Federated learning (FL) allows multiple participants to collaboratively build deep learning (DL) models without directly sharing data. Consequently, the issue of copyright protection in FL becomes important since unreliable participants may gain access to the jointly trained model. Application of homomorphic encryption (HE) in secure FL framework prevents the central server from accessing plaintext models. Thus, it is no longer feasible to embed the watermark at the central server using existing watermarking schemes. In this paper, we propose a novel client-side FL watermarking scheme to tackle the copyright protection issue in secure FL with HE. To our best knowledge, it is the first scheme to embed the watermark to models under the Secure FL environment.  We design a black-box watermarking scheme based on client-side backdooring to embed a pre-designed trigger set into an FL model by a gradient-enhanced embedding method. Additionally, we propose a trigger set construction mechanism to ensure the watermark cannot be forged. Experimental results demonstrate that our proposed scheme delivers outstanding protection performance and robustness against various watermark removal attacks and ambiguity attack.
\end{abstract}

\begin{IEEEkeywords}
Federated learning, copyright protection, digital watermark, client-side backdooring.
\end{IEEEkeywords}}

\maketitle

\IEEEdisplaynontitleabstractindextext

%
\IEEEpeerreviewmaketitle

\IEEEraisesectionheading{\section{Introduction}\label{Sec: Introduction}}

%
%
%
%
\IEEEPARstart{F}{ederated} learning (FL)~\cite{mcmahan2017communication}, which enables multiple data owners to learn a machine learning (ML) or deep learning (DL) model with joint efforts, is increasingly used in various applications such as medical image analysis~\cite{adnan2022federated, ng2021federated, kumar2021blockchain}, word predictions~\cite{hard2018federated, zhu2020empirical, singhal2021federated} and recommendation systems~\cite{yang2020federated, muhammad2020fedfast}. In this manner, a large amount of private data from multiple participants is available to generate accurate and reliable ML models. However, FL protects participants' personal data privacy at the sacrifice of holding the model also privately. Compared with traditional centralised model training, all parties in FL have access to the global model, increasing the risk of model leakage. Since training an FL model requires a large number of clients and computations, copyright protection of FL models becomes an important issue.


Model watermarking methods, which are currently used to protect the copyright of DL models, can be employed also to protect FL models. Broadly, current watermarking schemes can be divided into two categories: white-box watermarks and black-box watermarks~\cite{regazzoni2021protecting, fkirin2022copyright}. White-box approaches embed the digital watermark directly into the parameters of the DL model, and require the whole model for verification~\cite{uchida2017embedding, wang2021riga, szyller2021dawn, maini2021dataset}. Black-box schemes use backdoor attacks~\cite{gu2017badnets}, generating a unique set of data named a trigger set to embed the watermark into DL models~\cite{adi2018turning, zhang2018protecting, merrer2020adversarial, xue2022active, shafieinejad2021robustness}. Compared with white-box watermarking, black-box techniques require the application programming interface (API) of a DL model instead of direct access, making them ideal for copyright verification of both DL and FL models.

WAFFLE~\cite{tekgul2021waffle}, a recently proposed black-box watermarking scheme designed for FL, embeds the watermark into the global model by adding a retraining step at central server. After the aggregation phase, the central server takes trigger set and retrains the global model to embed the watermark. However, WAFFLE is not effective in secure federated learning, a framework that uses cryptographic methods to further enhance the protection of privacy-sensitive data and prevent privacy leakage from gradients~\cite{aono2017privacy}. Specifically, the secure FL based on homomorphic encryption (HE)~\cite{paillier1999public}, which encrypts the gradients, prevents the central server from accessing the plaintext of model parameters. Consequently, it is no longer feasible to directly use watermarking methods at the central server. In addition, WAFFLE is built on the assumption that the central server is the initiator of the FL procedure, i.e., that it is the owner of the FL model. This it is not always the case in business-to-consumer (B2C) FL~\cite{yang2019federated}, where an initiator gets its clients to jointly train an FL model. In this case, the central server can be a hired third party different from the initiator and other ordinary participants, while the initiator (a sole participant or a subgroup of participants) should be the actual owner(s). 

In this paper, we propose a novel watermarking scheme to tackle the issue of copyright protection in secure FL. Specifically, we present an FL model copyright protection approach for the initiator by embedding a backdoor into the FL model from the client side. This client-side watermarking scheme overcomes the limitation on embedding due to HE. Moreover, we design a unique enhancement method, 
and propose a novel trigger set construction method using a permutation-based secret key to tackle the problem of ambiguity attacks. 

Our contributions in this paper are:
\begin{itemize}
    \item A novel watermarking scheme is proposed to protect the copyright of secure FL models by embedding backdoor-based watermarks into FL models client-side, overcoming the limitations on watermark embedding due to HE encryption of the model.
    \item A non-ambiguous trigger set construction mechanism is designed for watermark embedding based on a permutation-based secret key and noise-based patterns to prevent adversaries from forging the watermark.
    \item A gradient-enhanced watermark embedding method is deployed to tackle the issue of slim effects of single clients on watermark embedding.
    \item Comprehensive experiments demonstrate that our method is resilient to watermark removal attacks, including fine-tuning, pruning, quantisation, and pattern embedding and spatial-level transformation.
\end{itemize}

The structure of the remainder of the paper is organized as follows. Section~\ref{Sec:Preliminaries} covers some necessary preliminaries, while the black-box watermarking problem in secure FL is formulated in Section~\ref{Sec:ProblemFormulation}. We introduce our proposed method in detail in Section~\ref{Sec:Watermarking}. A security analysis is conducted in Section~\ref{Sec:SecurityAnalysis}, while experimental results are presented in Section~\ref{Sec:Evaluation}. Finally, Section~\ref{Sec:Conclusions} concludes the paper.

\section{Preliminaries}
\label{Sec:Preliminaries}

In this section, we review three related preliminaries, including backdoor attacks, black-box watermarking in Machine Learning and homomorphic encryption scheme. The symbols used throughout the paper are summarised in Table~\ref{Table:Glossary}.

\begin{table}[h]
    \caption{Symbol definitions}
    \label{Table:Glossary}
    \centering
    \begin{tabular}{cl}
    \hline
    symbol & definition \\
    \hline
    $I$ & input space of DL models\\
    $O$ & output space of DL models\\
    $\mathbb{D}$ & Cartesian production of $I$ and $O$, i.e., $\mathbb{D}=I\times O$ \\
    $\hat{M}$ & the watermarked model \\
    $D_s$ & trigger set \\
    $D_b$ & dataset with benign samples \\
    $\mu, \nu$ & patch parameters \\
    $lk$ & location key  \\
    $ck$ & classification key \\
    
    $M_j$ & weights of model in $j$-th iteration \\
    $d$ & dimensionality of model parameters \\
    $L_j^i$ & local gradients of $i$-th client in $j$-th iteration \\
    $D^i$ & private dataset of $i$-th client \\
    $G_j$ & aggregated global gradients in $j$-th iteration \\
    $\lambda$ & scaling factor for watermark embedding \\
    \hline
    \end{tabular}
    \end{table}





\subsection{Backdoor Attack}
\label{Sec:BackdoorAttack}
Backdoor attack is a special technique which trains an ML model to make predesigned incorrect predictions when encountering some specific inputs~\cite{gu2017badnets, liu2020reflection, bagdasaryan2020backdoor, rieger2022deepsight}. The set of these specific inputs is called the trigger set.

\begin{definition}[Trigger Set]
\label{Def:TriggerSet}
The trigger set is the dataset with inappropriate labels during backdoor attack training. Let $I$ and $O$ be the input space and output space, respectively. Then, $T \subset \mathbb{D}$, $\mathbb{D}=I \times O$ is the trigger set generated from input and output spaces. Any element $(x,y) \in T$ should satisfy
\begin{equation}
\label{Eq:TriggerSe Definition}
f(x) \neq y ,
\end{equation}
where $x \in I$ are the input samples, $y \in O$ is the output, $f:I \rightarrow O\cup\{\perp\}$ is the function that outputs the ground truth label of input sample, and $\perp$ indicates that the ground truth label is not defined in the task. While benign samples $(x_b, y_b)$ has $f(x_b)=y_b$.
\end{definition}

A successfully backdoored ML model will not only output incorrect answers for the trigger set but also still perform well for benign inputs. 
Compared with the ground truth function $f$, the backdoor-attacked function $f^*$ satisfies
\begin{equation}
    \begin{aligned}
        &\mathrm{Pr}[f(x) \neq f^*(x) \mid x \in O \setminus T] \leq negl(\kappa) \\
        \vee \enspace &\mathrm{Pr}[f(x) = f^*(x) \mid x \in T] \leq negl(\kappa) ,
    \end{aligned}
\end{equation}
where $negl(\kappa)$ is a negligible function, and $\kappa$ is a security parameter.

\begin{definition}[Strong Backdoor~\cite{adi2018turning}]
A backdoor is strong if it is hard to be removed even for someone with full knowledge of the exact key generation and trigger set construction algorithms.
\end{definition}

Strong backdoors are related to the robustness of backdoor-based watermarks, and will be further discussed in Section~\ref{Sec:RobustnessAnalysis}.

\subsection{Black-box Watermarking in Machine Learning}

Black-box watermarking uses backdoor attacks to force an ML model to remember specific patterns or features~\cite{adi2018turning, guo2018watermarking, li2019how}. The backdoor attack causes the ML model to misclassify when encountering samples in the trigger set. The model owner keeps the trigger set secret and can thus verify the ownership of the model by triggering a misclassification.

A general black-box watermarking scheme for ML models can be split into three stages: trigger set construction, embedding, and verification.



\subsubsection{Trigger Set Construction}
In black-box watermarking, the trigger set is the direct carrier of the watermark which is embedded into the watermarked model. A trigger set construction algorithm $TrigCons:\varnothing \rightarrow \mathbb{D^*}$ generates the trigger set $D_s$. $D_s$ should be kept secret by the owner.


\subsubsection{Embedding}
In the embedding phase, the owner of the model uses a particular algorithm to train the model so that the model will have a specific output for the trigger set. An embedding algorithm $Embedding: \mathbb{R}^d \times \mathbb{D}^{|D_b|} \times \mathbb{D}^{|D_s|} \rightarrow \mathbb{R}^d$
creates the watermarked model $\hat{M}$ from the original model $M\in \mathbb{R}^d$, where $d$ is the dimensionality of model parameters, the trigger set $D_s$ and a benign dataset $D_b$.

\subsubsection{Verification}
A verification algorithm $Verify: \mathbb{R}^d \times \mathbb{D}^{|D_s|} \rightarrow \{0, 1\}$ takes the watermarked model and the secret dataset held by the owner and outputs the result of verification. `$1$' represents that the model owner successfully confirms their copyright of the model, while `$0$' represents the opposite. In black-box settings, the owner only needs to know the classification results of specific samples instead of the exact model weights. Thus, it can be carried out using API queries. 


\subsection{Homomorphic Encryption}
\label{Sec:CKKSHomomorphicEncryption}
Homomorphic encryption (HE)~\cite{rivest1978data, paillier1999public} is a cryptographic scheme that enables some specific mathematic operations, e.g., addition or multiplication, via encrypted data without decryption. HE thus allows processing of data without sharing the plaintext and HE is consequently employed to provide secure transmission and aggregation in secure FL~\cite{yang2019federatedlearning, park2022privacy, ma2022privacy}.

\section{Problem Formulation}
\label{Sec:ProblemFormulation}

\subsection{System Model}
As shown in Fig.~\ref{Fig:SystemModel}, three different parties are involved in watermarking in secure FL: the initiator, the central server, and ordinary clients. All three parties need to perform Secure FL and jointly train a DL model.

\begin{figure}[t]
    \centering
    \includegraphics[width=.95\linewidth]{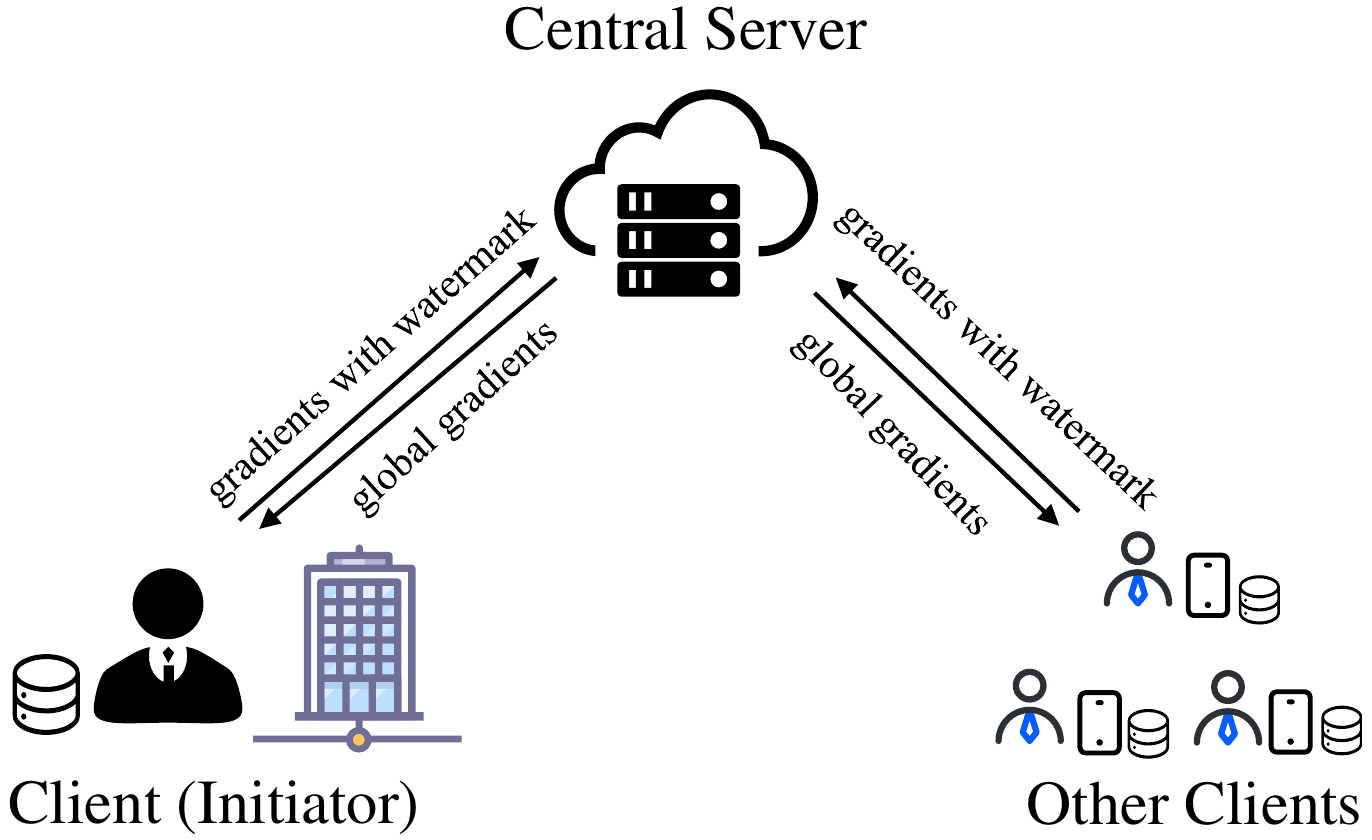}
    \caption{System model of watermarking in secure FL.}
    \label{Fig:SystemModel}
    \vspace{-10pt}
\end{figure}

The \textbf{clients} are the data owner in FL. Each client holds a bunch of personal data. When training an FL model, clients obtain the global gradients from the central server during each iteration. They then update their local model based on the global gradients and train it for several epochs. Finally, they send their local gradients to the central server for aggregation. To avoid privacy leakage, the gradients are encrypted by HE.

The \textbf{initiator} is a special client who is chosen to embed the watermark. Besides training a local model, the initiator is responsible for embedding the black-box watermark into the global model. In case the watermarked model is stolen by an adversary, the watermark can provide copyright verification for the initiator.

The \textbf{central server} is responsible for aggregating the local gradients. In each iteration, the central server collects the encrypted local gradients and aggregates them using an aggregation method. After aggregation, the central server sends the ciphertext of gradients to each client.

\subsection{Threat Model}
\label{Sec:ThreatModel}
In the threat model, we follow the prescriptive semi-honest assumption~\cite{yang2019federated}, which means any clients, including adversaries, should follow the pre-designed secure FL procedure, but might copy and steal the global model. When an adversary steals the watermarked model, they try to remove the watermark or forge their own trigger set. A benign user might also do some processing to the model before deployment. The model may thus be exposed to a variety of attacks.

\begin{definition}[Fine-tuning Attack]
A fine-tuning attack refers to the attempt of removing the watermark from the watermarked model by training the pre-trained model for a few iterations using a new dataset and a small learning rate. A fine-tuning attack algorithm $FineTune:\mathbb{R}^d \times \mathbb{D}^{|D_n|} \rightarrow \mathbb{R}^d$ can be formally defined as
\begin{equation}
    \label{Eq:Fine-tune}
    M_{ft} = FineTune(\hat{M}, D_n, \eta_{ft}) ,
\end{equation} 
where $M_{ft}$ is the model after fine-tuning, $D_n$ is the dataset used for fine-tuning, and $\eta_{ft}$ is the learning rate.
\end{definition}

\begin{definition}[Pruning Attack]
 A pruning attack uses a pruning method to remove the watermark from the watermarked model, where the pruning aims to decrease the number of effective parameters, e.g.\ by setting some unimportant parameters to zero. 
 A pruning attack algorithm $Prune:\mathbb{R}^d \rightarrow \mathbb{R}^d$ which can be defined as
\begin{equation}
    \label{Eq:PruningAttack}
    M_p = Prune(\hat{M}) ,
\end{equation}
where $M_p$ are the pruned model parameters.
\end{definition}

\begin{definition}[Quantisation Attack]
A quantisation attack attempts to remove the watermark through quantisation. Quantisation does not change the concrete structure of the model, but tries to reduce the number of bits used to represent each parameter. Let $b$ be the number of bits before quantisation, while $b'$ bits are used after, with $b > b'$. A quantisation algorithm $Quantise: \{\{0, 1\}^b\}^d \rightarrow \{\{0, 1\}^{b'}\}^d$  can be defined as 
\begin{equation}
    \label{Eq:QuantisationAttack}
    M_q = Quantise(\hat{M}) ,
\end{equation}
where $M_q$ are the quantised model parameters.
\end{definition}

\begin{definition}[PST Attack]
A pattern embedding and spatial-level transformation (PST)  attack~\cite{guo2021fine} attempts to invalidate a backdoor-based watermarking scheme through pre-processing input images in order to affect the classification of the secret trigger set. PST first resizes the input images, then uses a median filter to process some rows and columns of the images, and finally applies a spatial-level transformation to further process the images with random affine transformation and elastic distortions.
\end{definition}

\begin{definition}[Ambiguity Attack]
    In addition to removing the existing watermark from the watermarked model, attempting to forge a fake watermark is another common way to attack a watermarked model.
 An ambiguity attack tries to forge a watermark and verify the ownership of the existing watermarked model. An ambiguity attack algorithm $Forge: \mathbb{R}^d \rightarrow \mathbb{D}^{|D_s|}$ can be defined as
\begin{equation}
   \label{Eq:AmbiguityAttack}
   D_s^{'} = Forge(\hat{M}) ,
\end{equation}
where $\hat{M}$ is the watermarked model and $D_s^{'}$ is the fake trigger set. When the attacker uses $D_s'$ and $\hat{M}$ to claim its copyright, it can cause ambiguity and harm the actual owner.
\end{definition}

\begin{figure*}[t!]
    \centering
    \includegraphics[width=0.95\linewidth]{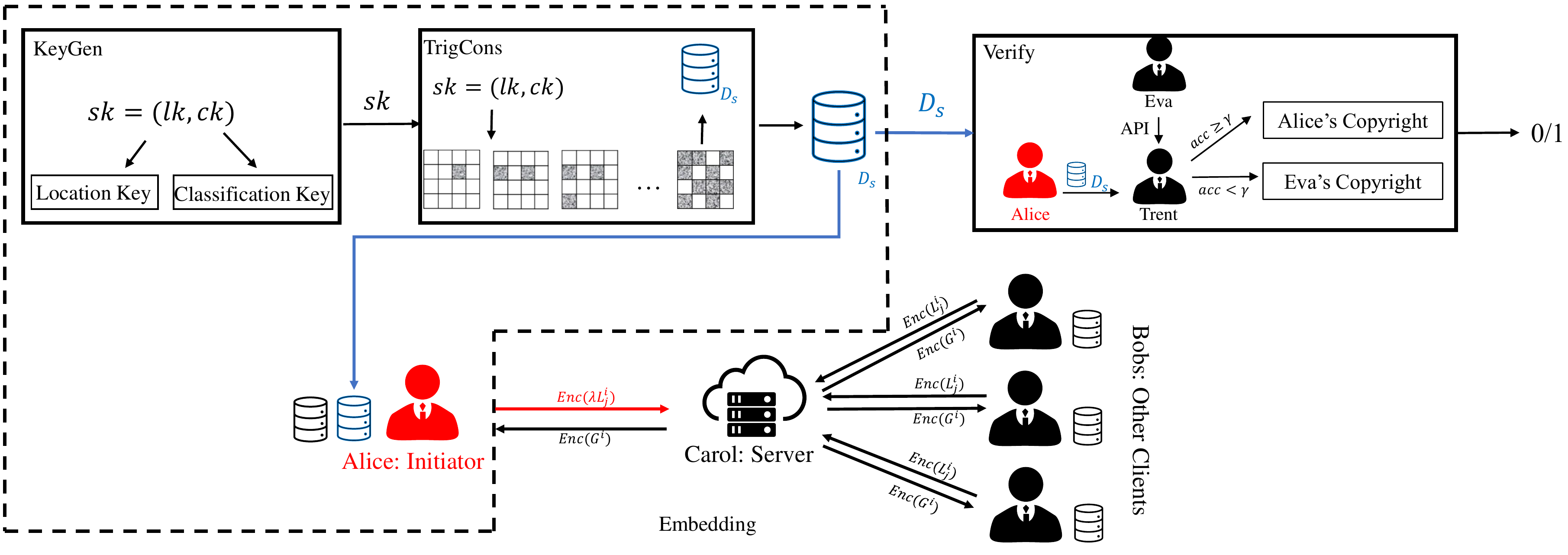}
    \caption{Framework of our proposed watermarking scheme.}
    \label{Fig:Overview}
    \vspace{-10pt}
\end{figure*}

\subsection{Design Goals}
\label{Sec:DesignGoals}
While designing a black-box watermarking scheme for FL, it is important to satisfy the following properties. In general, an outstanding watermarking scheme should provide \textbf{\emph{effectiveness, function preservation, low false-positive rate, robustness}}, and \textbf{\emph{non-ambiguity}}.
\begin{itemize}
\item
\textbf{Effectiveness}: Effectiveness signifies that if the tested model is actually the model embedded with the watermark, the verification algorithm will always output `$1$' when the input is an element from the trigger set $D_s$. This can be formally defined as
\begin{equation}
\label{Eq:CorrectVerification}
\mathrm{Pr}[Verify(\hat{M}, D_s)=1 \mid \hat{M}=Embedding(M, D_b, D_s)]=1 .
\end{equation}
\item 
\textbf{Function Preservation}: Function Preservation refers to that the watermarked model performs approximately as well as the primitive model. It indicates that the watermarking scheme has negligible impact on the functionality of the model. This can be reflected by the model's accuracy on the validation set $D_v$ and can be defined as
\begin{equation}
\label{Eq:FunctionPreservation}
Acc(M, D_v) - Acc(\hat{M}, D_v) \leq negl(\kappa) ,
\end{equation}
where $Acc:\mathbb{R}^d \times \mathbb{D}^{*} \rightarrow \mathbb{R}$ denotes the accuracy on validation set.
\item
\textbf{Low False Positive Rate}: It is crucial that the watermark should not be extracted from an unwatermarked model, i.e., the watermarking scheme should have a low false positive rate. Generally, for a $k$-class model, the false positive rate should be near or below $1/k$, which is the probability of guessing. Formally defined, for a given secret trigger set $D_s$ and a model $M$ without watermark,
\begin{equation}
\label{Eq:LowFalse-postiveRate}
Acc(M, D_s) \leq 1/k \enspace \vee \enspace Acc(M, D_s) \approx 1/k .
\end{equation}
\item
\textbf{Robustness}: Robustness means that when an attacker applies any attack algorithm $Att:\mathbb{R}^d \times S \rightarrow \mathbb{R}^d $ to the watermarked model, with $S$ the set of parameters used in the attack, the model should maintain the watermark.
A watermark is robust if one of the following two cases is true after attacks:
\begin{case}
\label{Prop:Robustness1}
Case~\ref{Prop:Robustness1} signifies that the attack does not significantly influence the functionality of the model in both the primitive task and the watermark. The model owner can thus still successfully verify its ownership. This can be formally defined as Eq.~(\ref{Eq:RobustnessProp1}).
\begin{equation}
\label{Eq:RobustnessProp1}
\begin{aligned}
    &Acc(\hat{M}, D_b) - Acc(Att(\hat{M}), D_b) \leq negl(\kappa)\enspace \\
    \wedge \enspace&Verify(Att(\hat{M}), D_s)=1 ,
\end{aligned}
\end{equation}
where $D_b$ is the benign dataset, $D_s$ is the trigger set, and $\hat{M}$ is the watermarked model.
\end{case}
\begin{case}
\label{Prop:Robustness2}
Case~\ref{Prop:Robustness2} signifies that the functionality of the attacked model drops significantly compared to the primitive model and is no longer useful for its task, rendering the attack unsuccessful. This can be formally defined as Eq.~(\ref{Eq:RobustnessProp2}).
\begin{equation}
\label{Eq:RobustnessProp2}
Acc(\hat{M}, D_b) - Acc(Att(\hat{M}), D_b) > negl(\kappa)
\end{equation}
\end{case}
\item
\textbf{Non-ambiguity}: A watermarked model is non-ambiguous if for any ambiguity attack algorithm $Forge:\mathbb{R}^d \rightarrow \mathbb{D}^{|D_s|}$ the probability of successful verification is negligible, i.e.
\begin{equation}
\label{Eq:Non-ambiguity}
\mathrm{P}\mathbf{r}[Verify(\hat{M}, D_s')|D_s'=Forge(\hat{M})] \leq negl(\kappa) ,
\end{equation}
where $D_s^{'}$ is the fake trigger set created by the attacker. Thus, the watermark is considered unforgeable.
\end{itemize}




\begin{figure*}[t]
    \centering
    \includegraphics[width=.95\linewidth]{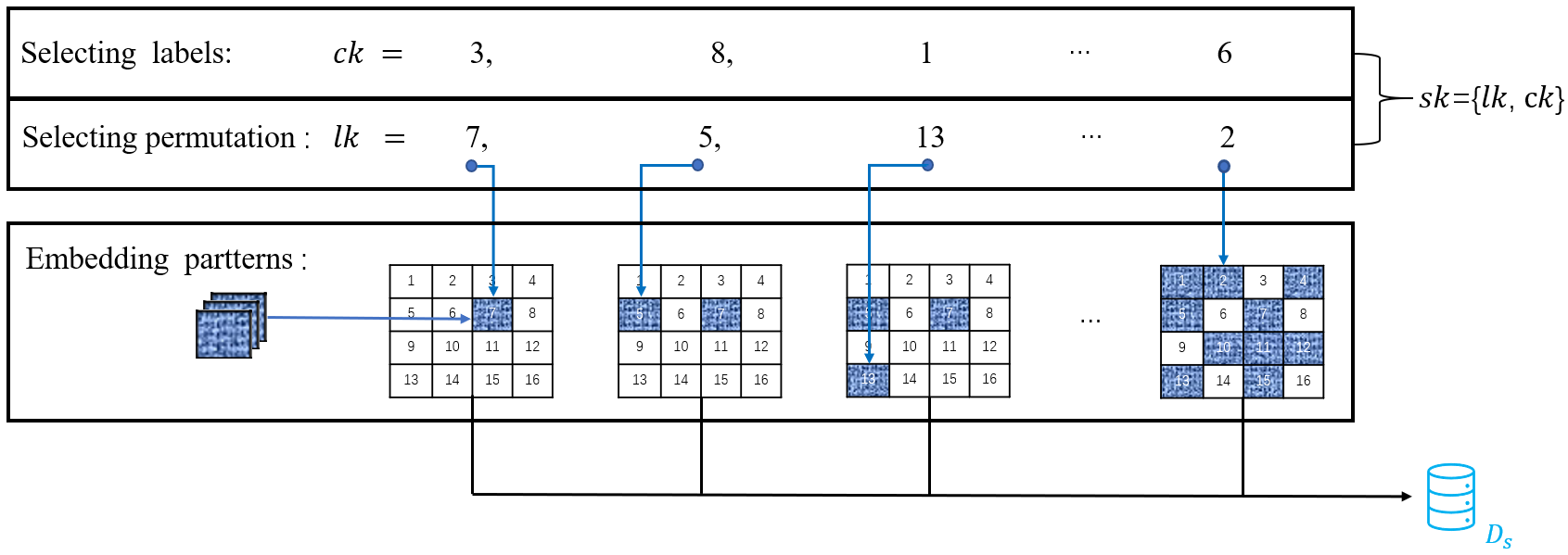}
    \caption{Example of key generation and trigger set construction algorithms.}
    \label{Fig:Keygen}
    \vspace{-10pt}
\end{figure*}

\section{Watermarking in Secure Federated Learning}
\label{Sec:Watermarking}

In this paper, we propose a novel client-side watermarking scheme for secure FL, which includes non-ambiguous key generation and trigger set construction algorithms as well as effective embedding and verification methods. An overview of our approach is shown in Fig.~\ref{Fig:Overview}.

Consider the following scenario: Alice is an enterprise that has a number of Bobs to use its services. Alice decides to improve the quality of its services by training a high-performance DL-based model, however does not own sufficient data to train such a model successfully. Alice therefore decides to use federated learning to generate a model with joint efforts from Bobs. At the training stage, Alice hires Carol to perform as the central server of the FL and utilizes HE to protect the privacy of its clients' local data. Consequently, none of these parties will possibly get any private information from other parties.

Since training an FL model requires sending the model parameters to each client, a malicious client, Eva, may copy the model, although it does not belong to her. To protect the copyright of the model, Alice therefore wants to embed a watermark into the model to allow for verification of unauthorised use of the model via a trustworthy third party Trent. Alice is one of the client-side nodes in the FL system in this scenario, and we follow the prescriptive semi-honest assumption~\cite{yang2019federated} where, except for the initiator Alice, any other party will follow the pre-designed algorithm.

\subsection{Key Generation for Trigger Set}
\label{Sec:KeyGeneration}

In our proposed scheme, a key generation algorithm is utilized to help construct the trigger set. A key generation algorithm $KeyGen:\varnothing \rightarrow \{0, 1\}^*$ randomly generates a bit string $sk$ as the secret key, although $sk$ can also be carefully chosen by the model owner. Non-repudiation and unforgeability of $sk$ directly affect the watermark's robustness against ambiguity attacks. A watermark without $sk$ can be easily forged and thus the watermarking scheme is unfeasible. 

Our insight of key generation and trigger set construction algorithms is based on the following observation. The primary dilemma of key generation and trigger set construction is that, on the one hand, intuitively, making the model remember unlabeled data such as random noise does less harm to the model's functionality than misclassifying meaningful samples. On the other hand, since constructing a trigger set with only one class leads to a security problem as it will be easy to forge, a trigger set with multiple classes is required, while simply forcing a model to classify samples with similar noise to different classes is difficult and may result in overfitting. We therefore first divide an image into several patches and add to the patches to construct a multi-classes trigger set so that the model will remember the location of noise instead of a specific noise pattern.

In the key generation phase, the initiator generates, using the algorithm defined in Algorithm~\ref{Algo:KeyGeneration}, a secret key used to construct the trigger set. In our work, we design the secret key $sk= (lk, ck)$ to comprise two parts. The first part denotes the positions of noise, while the other part denotes the corresponding labels. Assume that we try to watermark a $k$-class FL model. The owner first needs to choose two patch parameters, $\mu$ and $\nu$, that satisfy $\mu\nu \geq k$. 
 The owner then generates a random permutation of $k$ numbers from $\{x \mid x \in \mathbb{N} \wedge x < \mu\nu\}$, where $\mathbb{N}$ is the set of all non-negative integers. The permutation, called location key $lk$, is the first part of $sk$. Since the second part are the labels, the owner generates another permutation of $k$ numbers from $\{x \mid x \in \mathbb{N} \wedge x < k\}$ to yield the classification key $ck$.

\begin{algorithm}[h]
\caption{Key generation algorithm $KeyGen$}
\label{Algo:KeyGeneration}
\hspace*{0.02in}{\bf Input:} number of classes $k$\\
\hspace*{0.02in}{\bf Output:} secret key $sk$
\begin{algorithmic}[1]
\State Choose a pair of integers $\mu$, $\nu$ where $\mu\nu \geq k$
\State Randomly and successively select $k$ numbers from $\{x \mid x \in \mathbb{N} \wedge x < \mu\nu\}$ to construct a permutation to yield location key $lk$
\State Randomly and successively select $k$ numbers from $\{x \mid x \in \mathbb{N} \wedge x < k\}$ to construct a permutation to yield classification key $ck$
\State Set $sk = (lk, ck)$
\State \Return $sk$
\end{algorithmic}
\end{algorithm}

\subsection{Trigger Set Construction}
\label{Sec:TriggerSetConstruction}
After key generation, the initiator uses $sk=(lk, ck)$ to construct the trigger set following the algorithm defined in Algorithm~\ref{Algo:TriggerSetConstruction}. The input images with size $\varphi \times \xi$ are divided into $\mu \times \nu$ patches. In this way, each patch corresponds to an integer between 0 and $\mu\nu - 1$. Then the owner randomly samples $t$ patterns of $\lfloor \varphi / \mu \rfloor \times \lfloor \xi / \nu \rfloor$ pixels. For this, we use Gaussian noise to generate patterns, although the patterns can be any images that depend on the generation algorithm. Each pattern needs to be filled into the specific patch represented by $lk$ and the corresponding label $ck$. 

For example, the first element of $lk$ is $lk_0$ (we use subscript $n$ to denote the $n$-th element of permutation). We thus find the $lk_0$-th patch and fill it with $t$ sampled patterns. Pixel values of other patches which have not been filled should be set to zero. Therefore, we get $t$ different images with only one patch that is non-zero. These $t$ images are labeled $ck_0$. The second element corresponds to the $lk_1$-th patch. Based on the images with $lk_0$-th patch filled, we fill another patch, identified by $lk_1$, with the same pattern. After that, we get another $t$ images filled with two patches which we label $ck_1$. The rest can be done in the same manner. In $l-$th step, we should fill the patches corresponding to $\{lk_0, lk_1, \cdots, lk_{l-1}\}$ with the $t$ patterns and label them with class $ck_{l-1}$. Eventually, we will get $kt$ images with all $k$ classes, and the trigger set is composed of all these images. An example of our key generation and trigger set construction algorithms is illustrated in Fig.~\ref{Fig:Keygen}.

\begin{algorithm}[h]
\caption{Trigger set construction algorithm $TrigCons$}
\label{Algo:TriggerSetConstruction}
\hspace*{0.02in}{\bf Input:} patch parameters $\mu,\nu$; secret key $sk=(lk,ck)$\\
\hspace*{0.02in}{\bf Output:} trigger set $D_s$
    \begin{algorithmic}[1]
        \State Divide input $\varphi \times \xi$ image into $\mu \times \nu$ patches of $\lfloor \varphi / \mu \rfloor \times \lfloor \xi / \nu \rfloor$ pixels
        \State Generate $t$ patterns $P_t$ of $\lfloor \varphi / \mu \rfloor \times \lfloor \xi / \nu \rfloor$ pixels from Gaussian distribution
        \State $D_s = \varnothing$
        \For{$l=1$ to $k$}
            \State Select first $l$ terms of location key $lk$, i.e., $\{lk_0,lk_1,lk_2,$ $\dots, lk_{l-1}\}$
            \State Find set of patches corresponding with $\{lk_0, lk_1, lk_2,$ $ \dots, lk_{l-1}\}$.
            \State $D_l = \varnothing$
            \For{each pattern $p \in P_t$}
                \State Fill corresponding patches with $p$
                \State Add image to $D_l$.
            \EndFor
            \State Label images with $ck_{l-1}$ and get dataset $D_l$
            \State Set $D_s = D_s \cup D_l$
        \EndFor
        \State \Return $D_s$
    \end{algorithmic}
\end{algorithm}

Since $\mu$ and $\nu$ are used to divide the input image space into patches so that the secret key $sk$ can be embedded into the trigger set images, there are some restrictions when choosing the parameters. First, since we need to replace $k$ patches with the generated patterns, there should be at least $k$ patches. Second, each pattern needs to have at least one pixel for embedding. Therefore, the number of patches should not exceed the number of pixels in the input image. For $\varphi \times \xi$ input images, this leads to
\begin{equation}
    k \leq \mu\nu \leq \varphi\xi .
\end{equation}

Since $\mu\nu$ is the upper bound for the location key, it directly determines the size of the location key space; the larger $\mu\nu$, the bigger the location key space and the more difficult for an attacker to forge the secret key. On the other hand, a larger value of $\mu\nu$ yields fewer pixels for one patch, making it more challenging for the FL model to learn the trigger set images and the model potentially overfitting the trigger set. Thus, the robustness of the watermark is not guaranteed. Consequently, there is a trade-off between robustness and security that needs to be considered when choosing the patch parameters.

\subsection{Watermark Embedding}
\label{Sec:WatermarkEmbedding}

\begin{figure}[b!]
\centering
\includegraphics[width=.95\linewidth]{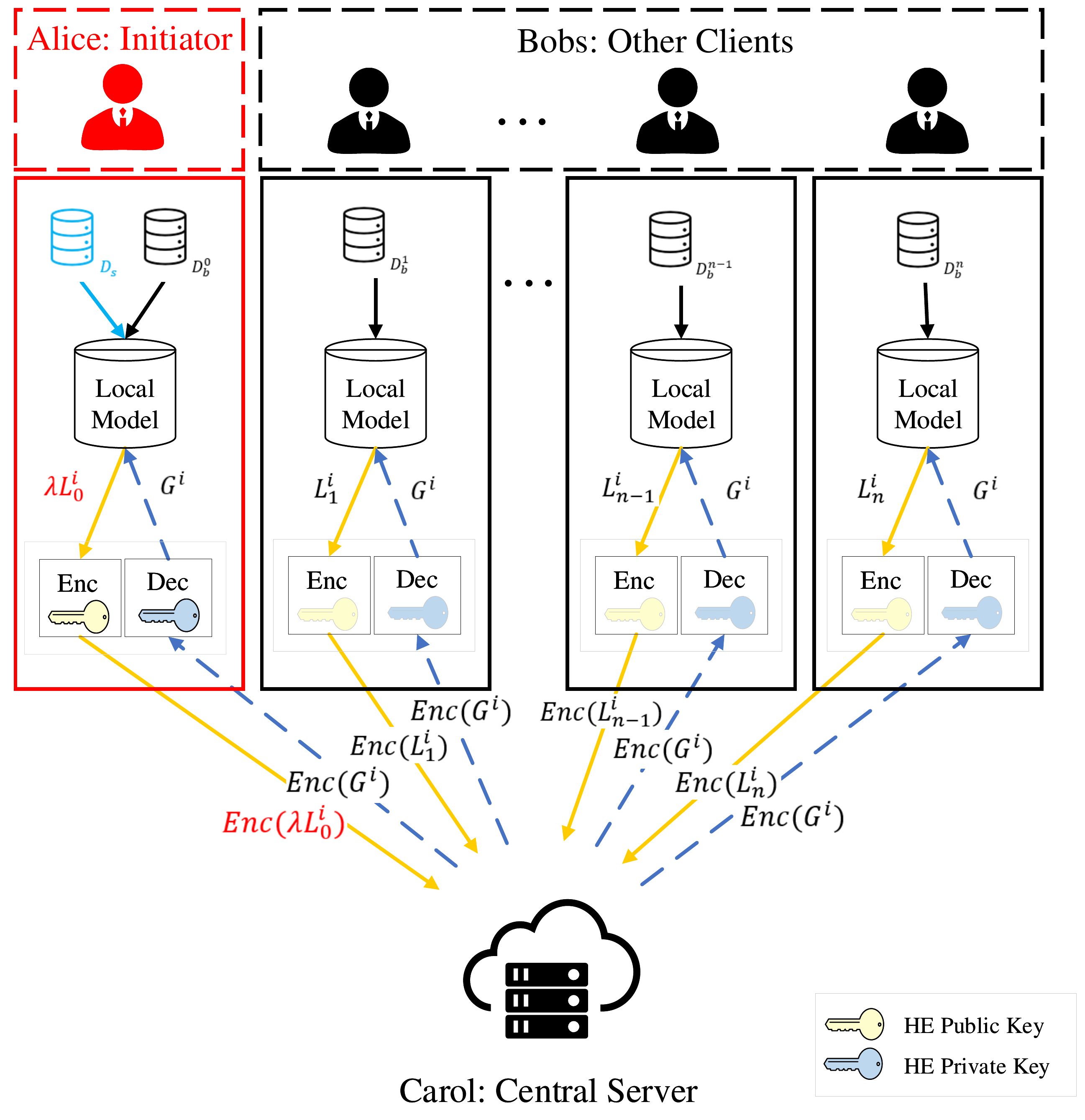}
\caption{Watermark embedding process in secure FL.}
\label{Fig:Embedding}
\vspace{-10pt}
\end{figure}

As illustrated in Fig.~\ref{Fig:Embedding}, we propose a gradient-enhanced algorithm for the initiator to embed the watermark into the FL model. The initiator only needs to perform a subtle change to carry out the embedding when transmitting the gradients. After key generation and trigger set construction, the initiator adds the trigger set into its normal training data samples to calculate the local gradients and embed the backdoor watermark into the jointly-trained model. Since the initiator is not guaranteed to be selected at each iteration and the initiator is only a single node in the FL model, we define a scaling factor $\lambda$ to improve the embedding process. When the initiator is selected to participate in the model update, $\lambda$ is used to enhance the gradient so that the FL model will remember the trigger set. Assuming that the initiator is the $1$-st client in FL, i.e., the index of the initiator is $0$, the clients get the encrypted messages as
\begin{equation}
\label{Eq:Watermark-embedding}
C_j^i=\begin{cases}
        Enc(\lambda L_j^i) &\text{if }i=0 \\
        Enc(L_j^i) &\text{otherwise}
\end{cases} ,
\end{equation}
where $L_j^i$ represents the local gradients of the $i$-th client in the $j$-th iteration, $C_j^i$ is the corresponding ciphertext, and $Enc$ represents any homomorphic encryption algorithm.

The watermark-embedding secure transmission algorithm is defined in Algorithm~\ref{Algo:Watermark-embedding}.

\begin{algorithm}[h!]
\caption{Watermark-embedding secure transmission algorithm}
\label{Algo:Watermark-embedding}
\hspace*{0.02in}{\bf Input:} local gradients $L_j^i$; scaling factor $\lambda$\\
\hspace*{0.02in}{\bf Output:} secure gradient $C_j^i$
    \begin{algorithmic}[1]
        \If{$i==0$} // the initiator node
            \State $C_j^i=Enc(\lambda L^i_j)$
        \Else
            \State $C_j^i=Enc(L^i_j)$
        \EndIf
        \State \Return $C_j^i$
    \end{algorithmic}
\end{algorithm}

The scaling factor $\lambda$ plays a significant role in the embedding and directly influences the watermarking scheme's effectiveness and function preservation quality. A small $\lambda$ results in a small effect of the initiator which might leads to a failed embedding, while a large $\lambda$ might significantly influence the global model's functionality. We design a suitable scaling factor setting as
\begin{equation}
\label{Eq:ScalingFactor}
\lambda = \frac{N}{n},
\end{equation}
where $n$ is the number of clients selected in each iteration, and $N$ is the total number of clients. For a small local learning rate which is common in DL, the model parameters change consistently in the training procedure. The gradients of each client during several iterations are approximately equal. Therefore, this is approximately equivalent to the case of the initiator node being selected in every iteration, corresponding to a well-applied technique in existing strong backdoor attacks~\cite{gu2017badnets, adi2018turning}. Our method thus has a consistent updating process to improve the stability of model convergence while performing well for watermarking.

\subsection{Watermark Verification}
When Alice needs to verify an unauthorised model deployment by Eva, Alice and Eva can recruit a fully-trustworthy third party Trent as the arbitrator who employs the watermark verification protocol define in Algorithm~\ref{Algo:Verification}.

\begin{algorithm}[h!]
\caption{Watermark verification protocol}
\label{Algo:Verification}
\hspace*{0.02in}{\bf Input:} a subset $D \subset D_s$; the API of Eva's model $\hat{M}$\\
\hspace*{0.02in}{\bf Output:} boolean value indicating verification result
    \begin{algorithmic}[1]
        \State Get $D$
        \State Get API of model $\hat{M}$
        \State Calculate accuracy $acc = Acc(\hat{M}, D)$
        \If{$acc \geq \gamma$}
            \State \Return $1$ // copyright verified
        \Else
            \State \Return $0$ // copyright unverified
        \EndIf
    \end{algorithmic}
\end{algorithm}

Alice first sends a subset of the secret trigger set $D_s$ to Trent. Trent then gets the API of Eva's model so that he can input samples and receive classification results. Trent uses the API to process Alice's samples and checks whether the results are in accord with the labels provided by Alice. If the resulting accuracy is above a threshold $\gamma$, the Alice's ownership of the model is confirmed and Trent can sentence Eva for infringement.

The threshold $\gamma$ controls the probability of a false positive verification. If a $k$-class model does not have a backdoor embedded watermark, it will classify samples from the backdoor dataset to random labels with a probability of correct classification of $1/k$. We expect the probability of successful copyright verification to be negligible on the model without backdoor. Assuming that Alice provides $n$ backdoor samples, threshold $\gamma$ satisfies
\begin{equation}
        \sum_{d=\lceil\gamma n\rceil}^n\tbinom{n}{d}(\frac{1}{k})^d(\frac{k-1}{k})^{n-d}\leq negl(\kappa) .
\end{equation}

Our protocol has some distinct advantages. First, Trent does not need to access the weights of Eva's model, avoiding the possibility of Eva cheating Trent by providing fake model parameters. Since the model is deployed to all the clients via API access, it is impossible for Eva to avoid scrutiny. Second, Alice needs to provide only some samples of her trigger set, allowing to preserve sufficient data for future verification.

\subsection{Detailed Watermarking Procedure in Secure FL}
\label{Sec:WatermarkingProcedure}
In this section, we introduce the details of watermarking procedure in Secure FL. Secure FL enhances the protection of privacy by protecting both data and gradients, while watermarking provides copyright protection of FL models. In general, the training procedure of FL proceeds in four phases: initialisation, local training, secure transmission, and aggregation.

\subsubsection{Initialisation}
In the initialisation phase, the initiator uses the initialisation function from~\cite{he2015delving} to initialise the global model and sends it to all clients. Let $M_j \in \mathbb{R}^d$ be the parameters of the model in the $j$-th iteration, where $\mathbb{R}$ is the set of real numbers, and $d$ is the number of model parameters.  With $Init: \varnothing \rightarrow \mathbb{R}^d$ the initialisation function, the initialisation phase is then defines as
\begin{equation}
\label{Eq:Initialization}
M_0 \leftarrow Init(\cdot), M_0 \in \mathbb{R}^d .
\end{equation}

For the initiator, besides getting the initialised global model, the initiator should key generation and trigger set construction algorithms introduced in Section~\ref{Sec:KeyGeneration} and Section~\ref{Sec:TriggerSetConstruction}. The generated trigger set will be used in the following procedure.

\subsubsection{Local Training}
In the local training phase, each client uses their own dataset to train their local model and calculate model gradients to optimise the defined loss function. When $D^i$ is the private dataset of the $i$-th client, $x \in I$ is the input sample, $y \in O$ is the output, and $Loss: \mathbb{R}^d \times \mathbb{D}^{*} \rightarrow \mathbb{R}$ is the loss function, the local gradients $L_j^i$ of the $i$-th client in the $j$-th iteration can be calculated by
\begin{equation}
\label{Eq:LocalTrain}
L_j^i = \eta\frac{\partial Loss(M_{j-1},  D^i)}{\partial M_{j-1}} ,
\end{equation}
where $\eta$ is the learning rate of the clients. In our implementation, we use stochastic gradient descent (SGD) as the optimiser and cross entropy loss as the loss function.

\subsubsection{Secure Transmission}
Before the local gradients to the central server, the clients need to encrypt the plaintext of the gradients to prevent a malicious central server from learning private information from the gradients~\cite{aono2017privacy}. We employ the state-of-the-art HE scheme CKKS (Cheon-Kim-Kim-Song)~\cite{cheon2017homomorphic} to encrypt each gradient. CKKS ensures both privacy and computability so that the gradients can be safely sent to the server for further processing.

For the initiator, Alice should utilize the gradient-enhanced algorithm introduced in Section~\ref{Sec:WatermarkEmbedding} to embed the watermark into the FL model. The enhanced gradients can overcome the limitations of single node in FL and effectively embed the watermark.

\subsubsection{Aggregation}
In the aggregation phase, the central server collects the ciphertexts of the local gradients of $n$ clients and aggregates the gradients to yield the global gradients by  
\begin{equation}
\label{Eq:Aggregation}
G_j =  Agg(\{L_j^0, L_j^1, \cdots ,L_j^{n-1} \}) ,
\end{equation}
where $Agg:\{0, 1\}^{c\times n}\rightarrow \{0, 1\}^c$ is the aggregation function, and $\rho$ is the learning rate of the central server. In our proposed scheme, the server performs the FedAvg~\cite{mcmahan2017communication} aggregation algorithm for this which is defined as
\begin{equation}
FedAvg(\{L_j^0, L_j^1, \cdots ,L_j^{n-1} \})=\frac{1}{\sum_{i=1}^n q^i}\sum_{i=1}^n q^i L_j^i ,
\end{equation}
where $q_i$ refers to the $i$-th client's quantity of its own dataset samples. By employing the CKKS computing method of ciphertexts, we do not need to alter the original aggregation algorithm to apply HE. After aggregation, clients can then use the global gradients to update their models.

\section{Security Analysis}
\label{Sec:SecurityAnalysis}
In security analysis, we focus on two significant properties, non-ambiguity and robustness. 

\subsection{Non-ambiguity}

Non-ambiguity defined in Section~\ref{Sec:DesignGoals} means the watermark can hardly be forged by the attacker. For our proposed scheme, we have the following theorem.

\begin{theorem}
\label{Theorem: Non-ambiguity}
The key generation and trigger set construction algorithms are non-ambiguous for a not so small number of classes in the primitive task, that is, any algorithm $g:\mathbb{R}^d \rightarrow \mathbb{D}^{|D_s|}$ will fail to forge a secret key $sk'$ and trigger set $D_s'$.
\end{theorem}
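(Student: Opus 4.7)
The plan is to reduce forgery to guessing a secret key in a large key space and then argue via a union bound. First I would enumerate that key space. The location key $lk$ is a $k$-permutation drawn from $\mu\nu$ patch indices, giving $(\mu\nu)!/(\mu\nu-k)!$ possibilities, and the classification key $ck$ is an arbitrary permutation of $\{0,1,\dots,k-1\}$, giving $k!$ possibilities. The total number of admissible keys is therefore
\begin{equation*}
|\mathcal{K}| = \frac{(\mu\nu)! \cdot k!}{(\mu\nu-k)!},
\end{equation*}
which grows factorially, so $1/|\mathcal{K}|$ is a negligible function of the security parameter $\kappa$ as soon as $k$ and $\mu\nu$ are not too small (the "not so small number of classes" qualifier in the statement).

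Next I would split the analysis of any forgery $(sk', D_s') = g(\hat{M})$ into two cases. \emph{Case A: $sk' = sk$.} Since $sk$ is sampled uniformly at random and the noise patterns are drawn independently afterwards, $\hat{M}$ encodes only a lossy image of the induced location-to-label map and no direct signal about the permutation ordering; hence the probability that $g$ reproduces $sk$ is bounded by $1/|\mathcal{K}|$. \emph{Case B: $sk' \neq sk$.} Then either $lk'\neq lk$ or $ck'\neq ck$, so the cumulative patch patterns in $D_s'$ were never reinforced with the labels that $g$ claims, and $\hat{M}$'s outputs on $D_s'$ behave as on an unwatermarked model. By the same binomial-tail bound used to set the verification threshold $\gamma$, the probability that at least $\lceil \gamma n \rceil$ of the $n$ forged samples coincidentally receive the intended label is at most
\begin{equation*}
\sum_{d=\lceil \gamma n \rceil}^{n} \binom{n}{d}\left(\tfrac{1}{k}\right)^{d}\left(\tfrac{k-1}{k}\right)^{n-d} \leq negl(\kappa).
\end{equation*}
A union bound across the two cases then delivers the claim.

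The hard part is rigorously justifying the Case~B assertion that $\hat{M}$ responds to an incorrectly keyed trigger set as an unwatermarked model would. In principle a strongly backdoored network might react anomalously to \emph{any} noise-patch configuration, not only to those matching $sk$, which could inflate the forger's success probability beyond $1/k$ per sample. I would address this by invoking the low false-positive-rate requirement imposed in Section~\ref{Sec:DesignGoals} (Eq.~\ref{Eq:LowFalse-postiveRate}): it is a design property of the scheme that trigger sets constructed from any key independent of $sk$ are classified on $\hat{M}$ at accuracy close to $1/k$, which is the standard assumption underlying every black-box backdoor watermark and which I would additionally verify empirically in the evaluation section. Under this assumption the two-case analysis plus union bound closes the proof.
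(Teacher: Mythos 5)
Your proposal is correct to the same standard of rigor as the paper and rests on the same two-case decomposition: the paper's proof likewise splits the forger's options into (1) recovering the true secret key and (2) constructing a trigger set under an independent key, and likewise counts the key space as $\mathrm{P}_{\mu\nu}^{k}\mathrm{P}_{k}^{k}=\frac{(\mu\nu)!\,k!}{(\mu\nu-k)!}$. Where you diverge is in how each case is quantified. The paper treats case (1) computationally rather than probabilistically: it argues that traversing the key space costs $O(k!n^k)$ verification work and gives a concrete estimate (roughly $3170$ years for CIFAR-10 with $\mu=\nu=4$ at $0.01$ seconds per inference), whereas you bound the guessing probability by $1/|\mathcal{K}|$ and close with a union bound; your framing is cleaner as a reduction, though your claim that $\hat{M}$ carries ``no direct signal about the permutation ordering'' is a hand-wave --- an adaptive adversary could query single-patch and cumulative-patch noise images to reconstruct the location-to-label map, an attack neither your argument nor the paper's brute-force framing actually rules out. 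For case (2) your treatment is tighter than the paper's: the paper asserts the forgery success probability is $\frac{1}{|D_s'|k^k}$, a somewhat ad hoc expression, while your binomial-tail bound $\sum_{d=\lceil\gamma n\rceil}^{n}\binom{n}{d}\left(\frac{1}{k}\right)^{d}\left(\frac{k-1}{k}\right)^{n-d}$ is the one actually consistent with the verification protocol, which only demands trigger-set accuracy at least $\gamma$ rather than perfect agreement. You also explicitly surface the crux --- that an off-key trigger set must behave on $\hat{M}$ as random inputs do on an unwatermarked model --- and tie it to the low-false-positive-rate design goal; the paper makes the identical assumption (``the randomly generated images will also be randomly classified'') but without flagging it as the load-bearing, empirically-backed step. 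Net: same decomposition and same underlying heuristic, but your version is more coherent with the $\gamma$-threshold verification, and your honesty about the Case~B assumption is a genuine improvement over the paper's presentation.
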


\begin{proof}
For an adversary who obtains the watermarked model and tries to forge a secret key and trigger set, there are two ways to do that: (1) generate several random patterns and attempt to find the true secret key $sk$, or (2) establish a permutation and use an image generator to construct a trigger set so that he can successfully verify the copyright.

In the first strategy, the adversary tries to find the true secret key by brute force. Thus, the required time consumption and security mainly depend on the secret key space. As discussed in Section~\ref{Sec:TriggerSetConstruction}, the patch parameters $\mu$ and $\nu$ determine the size of the location key space, while the classification key space depends on the number of classes. For a $k$-class model, the length of permutations should be $k$, leading
\begin{equation}
\mathrm{P}_{\mu\nu}^{k}\mathrm{P}_{k}^{k}=\frac{(\mu\nu)!k!}{(\mu\nu - k)!} 
\end{equation}
different secret keys, where $\mathrm{P}$ denotes permutation number and $\mathrm{P}_n^m=\frac{n!}{(n-m)!}$. Therefore, as $\mu\nu$ grows, with the algorithm of $O(k!n^k)$, this is hard to accomplish in practice.

In the second strategy, the adversary needs to generate a trigger set with its own secret key. Assume that for a protected $k$-class model $\hat{M}$, the randomly generated images will also be randomly classified, i.e., the probability of an image to be classified to any class will be $1/k$. Thus, the probability of forging a satisfying trigger set $D_s'$ is 
\begin{equation}
\mathrm{Pr}[Verify(\hat{M}, D_s')\mid D_s' = Random(\cdot)]=\frac{1}{|D_s'|k^k} ,
\end{equation}
which is also not acceptable for some $k$. Considering a simple example which embeds watermarks into the 10-class CIFAR-10 dataset~\cite{krizhevsky2009learning}, we choose $\mu = \nu = 4$ and assume that it takes $0.01$ second for one sample's inference. For a $1000$-image trigger set, it would then take approximately $3170$ years to traverse the secret key space.

Consequently, both strategies to forge a trigger set fail, proving our watermark scheme to be non-ambiguous.
\end{proof}

\subsection{Robustness}
\label{Sec:RobustnessAnalysis}
The robustness of a backdoor-based watermark is related to the strong backdoor defined in Section~\ref{Sec:BackdoorAttack}. That the watermark is robust is equivalent to the backdoor being a strong backdoor. Thus, to guarantee robustness, we need to prove the trigger set embedded using our proposed scheme is the strong backdoor.

\begin{theorem}
For any attack algorithm $Att \in \{ f \mid {\rm dom}f = \mathbb{R}^d \times S \}$, the watermarked model from our watermark embedding method satisfies that either Case~\ref{Prop:Robustness1} or Case~\ref{Prop:Robustness2} is true, i.e., it satisfies the requirement of \emph{robustness}.
\end{theorem}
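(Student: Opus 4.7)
The plan is to reduce robustness to the strong-backdoor property noted in Section~\ref{Sec:RobustnessAnalysis}, and then argue a dichotomy on the magnitude of the parameter change induced by the attack. Concretely, I would introduce a quantity $\Delta = \|\hat{M} - Att(\hat{M})\|$ (in a suitable norm) and show that there is a threshold $\Delta^*$ such that if $\Delta < \Delta^*$, the backdoor predictions on $D_s$ are preserved and Case~\ref{Prop:Robustness1} holds, while if $\Delta \geq \Delta^*$, the attacked model necessarily loses accuracy on $D_b$ so that Case~\ref{Prop:Robustness2} holds. This reduces the theorem to exhibiting such a $\Delta^*$.

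To exhibit $\Delta^*$ I would appeal to the way the trigger set is constructed and embedded. The key lemma I want is that the noise-pattern-in-patch features learned from $D_s$ are entangled with the feature detectors used for $D_b$, in the sense that the patch locations selected by $lk$ overlap the receptive fields of the same convolutional filters that drive normal classification. I would formalise this by observing that the gradient-enhanced update in Eq.~(\ref{Eq:Watermark-embedding}) with $\lambda = N/n$ makes the initiator's contribution equivalent to being selected in every round, so by the standard strong-backdoor argument of~\cite{adi2018turning,gu2017badnets} the backdoor weights lie on the same loss-minimising manifold as the benign classifier. Any perturbation that zeros out these weights therefore moves off that manifold, increasing the loss on $D_b$ by a quantity that can be lower-bounded via a smoothness/Lipschitz argument on the loss surface.

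With that in hand I would dispatch each attack in Section~\ref{Sec:ThreatModel} separately as corollaries. For fine-tuning, the small learning rate $\eta_{ft}$ and limited number of iterations bound $\Delta$ below $\Delta^*$, so Case~\ref{Prop:Robustness1} applies. For pruning and quantisation, I would argue that the fraction of weights one can zero out or coarsen while keeping $Acc(Att(\hat{M}), D_b) \approx Acc(\hat{M}, D_b)$ is bounded, and within that budget the patch-feature detectors survive because they are reused by benign classification. For the PST attack, since the patterns in $D_s$ are broadband Gaussian noise tied to absolute patch locations rather than texture, median filtering and affine/elastic distortions applied uniformly to all inputs either leave the patch-location signal intact (Case~\ref{Prop:Robustness1}) or, if aggressive enough to destroy it, also destroy the spatial structure benign inputs rely on (Case~\ref{Prop:Robustness2}).

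The hard part, and the step I expect to be mostly heuristic rather than fully rigorous, is the entanglement lemma: proving that the weights supporting the trigger-set predictions genuinely coincide with weights needed for $D_b$. A clean proof would require assumptions on network capacity and the training dynamics that are difficult to state without strong idealisations, so I would likely state it as a property empirically verified in Section~\ref{Sec:Evaluation} and use it as the bridge between the strong-backdoor property and the dichotomy $\Delta < \Delta^*$ versus $\Delta \geq \Delta^*$. Everything else reduces to bounding $\Delta$ per attack type and invoking continuity of $Acc(\cdot, D_b)$ and $Acc(\cdot, D_s)$ in the model parameters.
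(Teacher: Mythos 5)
Your proposal actually contains the paper's entire proof as one interior step: the observation that the gradient enhancement with $\lambda = N/n$ makes the initiator's contribution approximately equivalent to being selected in every round, so that trigger samples effectively appear in every training batch, whereupon the strong-backdoor result of~\cite{adi2018turning} applies. The paper's proof is exactly that two-sentence reduction and nothing more: a strong backdoor is by definition one that cannot be removed without a striking decline in primitive functionality, which is precisely the dichotomy between Case~\ref{Prop:Robustness1} and Case~\ref{Prop:Robustness2}, and the paper (like you) defers the rest to the experiments in Section~\ref{Sec:Evaluation}. So at the core your route and the paper's coincide.

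The additional scaffolding you wrap around that core, however, has a genuine flaw. A norm threshold $\Delta^*$ on $\Delta = \|\hat{M} - Att(\hat{M})\|$ cannot separate the two cases, because the effect of a parameter perturbation is directional, not magnitudinal: a small-norm perturbation targeted at the backdoor-relevant weights can erase the watermark while leaving benign accuracy intact (this is exactly what a successful pruning or fine-tuning attack would be), whereas a large-norm perturbation aligned with continued benign training preserves accuracy on $D_b$. Whether a given direction is "safe" for the attacker is precisely what your entanglement lemma is supposed to rule out --- but that lemma is a restatement of the strong-backdoor property itself, so once you concede it can only be asserted empirically, the $\Delta$-dichotomy adds no rigour over the bare citation and becomes circular as a proof device. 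Note also that the PST attack~\cite{guo2021fine} does not modify the model at all (it preprocesses inputs), so $Att(\hat{M})$ is not an element of $\mathbb{R}^d$ and $\Delta$ is undefined for it; your separate input-space treatment tacitly abandons the framework the theorem was supposed to be reduced to. The clean course is the one the paper takes: state the reduction to~\cite{adi2018turning} as the whole proof, acknowledging (as the paper does) that the claim is ultimately supported empirically rather than pretending the dichotomy can be made quantitative.
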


\begin{proof}
As demonstrated in~\cite{adi2018turning}, when a backdoor is embedded by adding trigger set samples into each batch, the backdoor is a strong backdoor. A model containing a strong backdoor means that this backdoor, and therefore the watermark, cannot be removed by an algorithm without a striking decline of the primitive functionality. As presented in Section~\ref{Sec:WatermarkEmbedding}, the backdoor we embed in our scheme is a strong backdoor which means our scheme satisfies the robustness requirement. 
\end{proof}

Additionally, the theorem can also be confirmed by our experiments in Section~\ref{Sec:Evaluation}.

\section{Evaluation}
\label{Sec:Evaluation}

\setlength{\abovecaptionskip}{0.1cm}
\begin{figure*}[t]
    \centering
    \includegraphics[width=.95\linewidth]{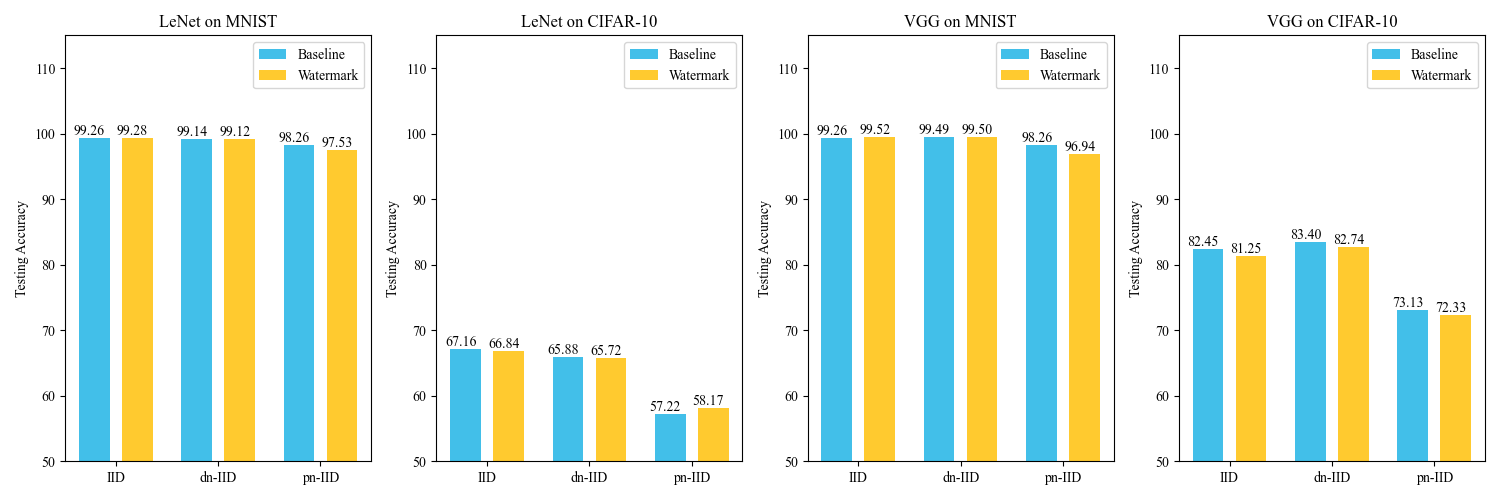}
    \caption{Function preservation results of watermarked models.}
    \label{Fig:Function-preserving}
    \vspace{-15pt}
\end{figure*}

\subsection{Experimental Settings}
In our implementation, we use TensorFlow (version 2.3.4) and TensorFlow-Federated (version 0.17.0) as DL and FL framework, respectively. We encrypt the gradients using Tenseal (version 0.3.4) which implements CKKS. The learning rate of each client is set to $0.01$ while the learning rate of the central server is $1.0$. In the local training phase, each selected client trains the model for two local epochs before transmitting the gradients to central server. The number of clients is set to $100$ to simulate a real-world FL environment.

In our experiments, we adopt two different convolutional neural network (CNN) architectures to evaluate the performance of the proposed method. The first CNN is, as also in~\cite{mcmahan2017communication}, the classical LeNet~\cite{lecun1998gradient} which consists of two convolutional layers, each followed by a $2\times2$ max-pooling layer, and two fully connected layers. The second is VGG~\cite{simonyan2014very} with 13 convolutional layers and 3 fully connected layers.

\subsection{Dataset Settings}
In our experiments, we use two datasets, MNIST~\cite{y2010MNIST} and CIFAR10~\cite{krizhevsky2009learning}. MNIST is a grayscale image dataset with 60,000 training data of the 10 digits and an additional 10,000 for testing, while CIFAR10 comprises 50,000 RGB images for training and 10,000 for testing. To keep consistency, the $28\times28$-pixel images in MNIST are resized to $32\times32$ pixels, the same size as CIFAR10 images.

For data distribution, we split the data using three different distributions. One is the independent identically distribution (IID) which means that the whole training dataset is uniformly allocated to each client. The other two are non-IID distributions. For one, denoted as dn-IID, we use a Dirichlet distribution to split the data accordingly, as used in~\cite{yurochkin2019bayesian}. For the other, we employ the method from~\cite{mcmahan2017communication} to construct a pathological non-IID distribution (pn-IID), which first sorts the dataset according to the labels and then splits them into $tN$ parts where $N$ is the number of clients, with each client randomly choosing $t$ parts as its training set. In this case, each client only has data with $t$ different labels at most. We implement the settings in~\cite{mcmahan2017communication} and set $t=2$ in our experiments.

For trigger set construction, we set the default patch parameters to $\mu = \nu = 4$ for both tasks, making them large enough for security, while in the experiments, for comparing different patch parameters, $\mu=\nu=4, 6, 16$ are used. Before embedding the watermark, we generate 100 images as the trigger set, with each class comprising 10 examples, while we use 1,000 images constructed with the same secret key for verification to test the generalisation ability of our watermarking scheme.

\subsection{Effectiveness}
Effectiveness indicates whether the watermark is successfully embedded into the model, i.e., by the accuracy on the trigger set $D_s$. The effectiveness results are shown in Table~\ref{Table:Effectiveness}.

\begin{table}[h!]
\caption{Watermarking accuracy of watermarked FL models.}
\label{Table:Effectiveness}
\centering
\begin{tabular}{l|ccc|ccc}
\hline
\hline
& \multicolumn{3}{c|}{MNIST} & \multicolumn{3}{c}{CIFAR10} \\
& IID & dn-IID & pn-IID & IID & dn-IID & pn-IID\\
\hline
\hline
LeNet & 1.000 & 1.000 & 0.999 & 0.998 & 1.000 & 0.996\\
VGG & 1.000 & 1.000 & 0.989 & 0.997 & 1.000 & 0.995\\
\hline
\hline
\end{tabular}
\vspace{-10pt}
\end{table}

Table~\ref{Table:Effectiveness} demonstrates the success of embedding the watermark into the FL models. All accuracies on the secret trigger set exceed $98\%$. This also indicates that our trigger set construction algorithm has appropriate generalisation ability for both datasets and both architectures.

\subsection{Function Preservation}
We compare the watermarked FL models with normal models without watermarks in terms of the primitive functionality. The results of this are shown in~Fig.~\ref{Fig:Function-preserving}, from which we can see that the watermarked models do quite well, with a drop in accuracy of below $1\%$ in most cases. This demonstrates that our watermarking scheme has a negligible impact on the functionality of FL models.

\subsection{False Positive Rate}
We take the normal FL models to measure whether the watermark can be detected. In the black-box watermark scheme, this means calculating the accuracy of the secret trigger set with normally-trained models. As demonstrated in Fig.~\ref{Fig:False-positive}, our watermarking scheme satisfies the requirement defined in Section~\ref{Sec:DesignGoals}, with a maximum false positive rate of $11\%$, which means the rates are near or below the expected $10\%$ for a 10-class problem.

\subsection{Robustness}
We conduct four different watermark removal attacks to evaluate the robustness of our watermark, including fine-tuning, pruning, quantisation, and PST attacks as defined in Section~\ref{Sec:ThreatModel}.

\begin{figure}[t!]
    \centering
    \includegraphics[width=.95\columnwidth]{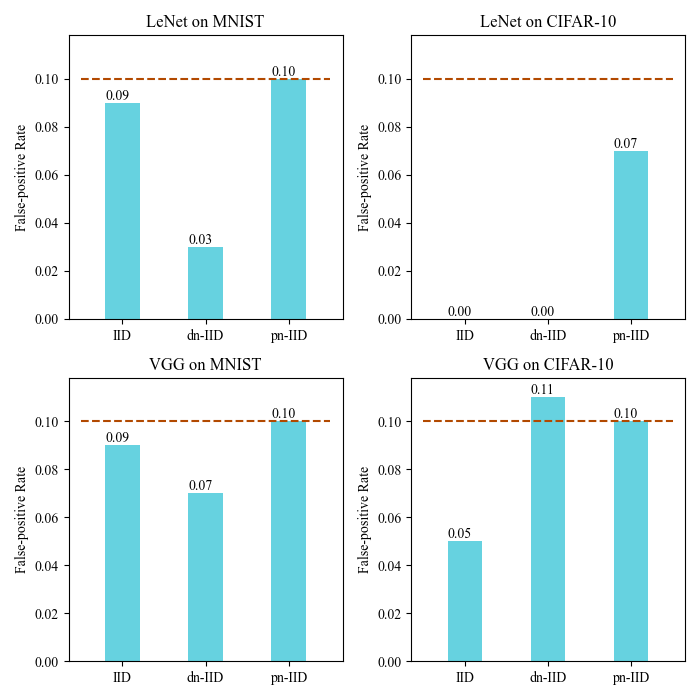}
    \caption{False positive rates of detecting watermark in unwatermarked models.}
    \label{Fig:False-positive}
    \vspace{-4pt}
\end{figure}
    
\begin{figure}[t!]
    \centering
    \includegraphics[width=0.90\columnwidth]{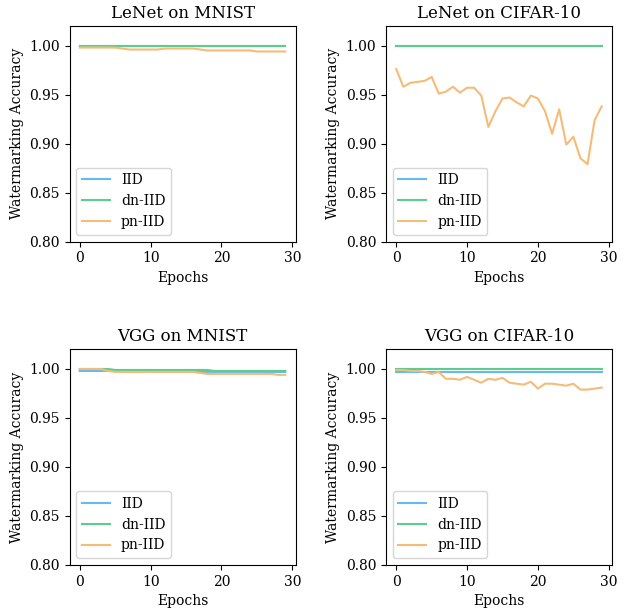}
    \caption{Results of models under fine-tuning attack.}
    \label{Fig:Fine-tuningAttack}
    \vspace{-10pt}
\end{figure}

\subsubsection{Fine-tuning Attack}
In the fine-tuning attack experiment, we fine-tune the watermarked FL model for 30 epochs. As illustrated in Fig.~\ref{Fig:Fine-tuningAttack}, the FL models successfully maintain the watermark for most tasks, with an occasional slight drop in the accuracy of the trigger set. All accuracies remain above $87\%$, sufficient to verify copyright. 

\begin{table*}[h]
    \caption{Watermark (wm) and test accuracy against pruning attack.}
    \label{Table:Pruning}
    \centering
    \begin{tabular}{lc|cc|cc|cc||cc|cc|cc}
    \hline
    \hline
    & & \multicolumn{6}{c||}{MNIST} & \multicolumn{6}{c}{CIFAR10} \\
    &  & \multicolumn{2}{c|}{IID} & \multicolumn{2}{c|}{dn-IID} & \multicolumn{2}{c||}{pn-IID} & \multicolumn{2}{c|}{IID} & \multicolumn{2}{c|}{dn-IID} & \multicolumn{2}{c}{pn-IID}\\
    & pruning rate & wm & test & wm & test & wm & test & wm & test & wm & test & wm & test\\
    \hline
    \hline
    \multirow{10}{*}{LeNet} & 5\% & 1.000& 0.993& 1.000& 0.991& 0.999& 0.975& 0.998& 0.668& 1.000& 0.658& 0.997& 0.582\\
    & 10\% & 1.000& 0.993& 1.000& 0.991& 0.999& 0.975& 0.998& 0.667& 1.000& 0.657& 0.997& 0.580\\
    & 15\% & 1.000& 0.993& 1.000& 0.991& 0.999& 0.975& 0.998& 0.666& 1.000& 0.656& 0.995& 0.582\\
    & 20\% & 1.000& 0.993& 1.000& 0.991& 0.999& 0.974& 0.996& 0.667& 1.000& 0.658& 0.995& 0.584\\
    & 25\% & 1.000& 0.993& 1.000& 0.991& 0.999& 0.975& 0.998& 0.663& 1.000& 0.657& 0.993& 0.574\\
    & 30\% & 1.000& 0.993& 1.000& 0.991& 0.999& 0.974& 0.998& 0.663& 1.000& 0.656& 0.991& 0.569\\
    & 35\% & 1.000& 0.992& 1.000& 0.991& 1.000& 0.974& 0.998& 0.660& 0.999& 0.652& 0.995& 0.560\\
    & 40\% & 1.000& 0.992& 1.000& 0.991& 1.000& 0.973& 0.993& 0.656& 0.999& 0.649& 0.979& 0.553\\
    & 45\% & 1.000& 0.992& 1.000& 0.990& 0.999& 0.975& 0.988& 0.651& 0.998& 0.646& 0.897& 0.541\\
    & 50\% & 1.000& 0.992& 1.000& 0.990& 0.999& 0.973& 0.981& 0.641& 0.996& 0.632& 0.888& 0.527\\
    \hline
    \multirow{10}{*}{VGG} & 5\% & 1.000& 0.995& 1.000& 0.995& 0.989& 0.969& 0.997& 0.812& 1.000& 0.827& 0.996& 0.724\\
    & 10\% & 1.000& 0.995& 1.000& 0.995& 0.988& 0.969& 0.997& 0.812& 1.000& 0.827& 0.995& 0.724\\
    & 15\% & 1.000& 0.995& 1.000& 0.995& 0.988& 0.969& 0.997& 0.811& 1.000& 0.825& 0.997& 0.724\\
    & 20\% & 1.000& 0.995& 1.000& 0.995& 0.988& 0.968& 0.997& 0.810& 1.000& 0.822& 0.997& 0.726\\
    & 25\% & 1.000& 0.995& 1.000& 0.995& 0.987& 0.967& 0.997& 0.808& 1.000& 0.821& 0.997& 0.728\\
    & 30\% & 1.000& 0.995& 1.000& 0.995& 0.987& 0.966& 0.996& 0.801& 1.000& 0.818& 0.996& 0.725\\
    & 35\% & 1.000& 0.995& 1.000& 0.995& 0.983& 0.964& 0.994& 0.791& 1.000& 0.815& 0.996& 0.721\\
    & 40\% & 1.000& 0.995& 0.998& 0.994& 0.934& 0.953& 0.991& 0.773& 1.000& 0.808& 0.988& 0.708\\
    & 45\% & 1.000& 0.994& 0.985& 0.992& 0.669& 0.920& 0.961& 0.743& 1.000& 0.790& 0.961& 0.676\\
    & 50\% & 1.000& 0.993& 0.921& 0.989& 0.191& 0.786& 0.879& 0.683& 1.000& 0.743& 0.779& 0.609\\
    \hline
    \hline
    \end{tabular}
    \vspace{-5pt}
\end{table*}

\begin{table*}[h]
    \caption{Watermark (wm) and test accuracy against quantisation attack.}
    \label{Table:Quantisation}
    \centering
    \begin{tabular}{lc|cc|cc|cc||cc|cc|cc}
    \hline
    \hline
    &  & \multicolumn{6}{c||}{MNIST} & \multicolumn{6}{c}{CIFAR10} \\
    &  & \multicolumn{2}{c|}{IID} & \multicolumn{2}{c|}{dn-IID} & \multicolumn{2}{c||}{pn-IID} & \multicolumn{2}{c|}{IID} & \multicolumn{2}{c|}{dn-IID} & \multicolumn{2}{c}{pn-IID}\\
    & \# bits & wm & test & wm & test & wm & test & wm & test & wm & test & wm & test\\
    \hline
    \hline
    \multirow{7}{*}{LeNet} & 8 bits & 1.000& 0.993& 1.000& 0.991& 0.999& 0.975& 0.998& 0.667& 1.000& 0.657& 0.997& 0.581\\
    & 7 bits & 1.000& 0.993& 1.000& 0.991& 0.999& 0.975& 0.998& 0.668& 1.000& 0.657& 0.997& 0.580\\
    & 6 bits & 1.000& 0.993& 1.000& 0.991& 0.999& 0.975& 0.996& 0.666& 1.000& 0.659& 0.995& 0.580\\
    & 5 bits & 1.000& 0.993& 1.000& 0.991& 0.999& 0.975& 0.992& 0.667& 1.000& 0.656& 0.994& 0.585\\
    & 4 bits & 1.000& 0.993& 1.000& 0.991& 0.998& 0.975& 0.999& 0.656& 1.000& 0.646& 0.984& 0.579\\
    & 3 bits & 0.988& 0.990& 0.997& 0.988& 0.999& 0.974& 0.949& 0.632& 0.953& 0.614& 0.925& 0.531\\
    & 2 bits & 0.999& 0.980& 0.930& 0.974& 0.985& 0.963& 0.140& 0.167& 0.300& 0.240& 0.418& 0.320\\
    \hline
    \multirow{7}{*}{VGG} & 8 bits & 1.000& 0.995& 1.000& 0.995& 0.989& 0.969& 0.997& 0.813& 1.000& 0.827& 1.000& 0.723\\
    & 7 bits & 1.000& 0.995& 1.000& 0.995& 0.989& 0.969& 0.997& 0.812& 1.000& 0.827& 1.000& 0.723\\
    & 6 bits & 1.000& 0.995& 1.000& 0.995& 0.988& 0.969& 0.997& 0.813& 1.000& 0.826& 0.999& 0.726\\
    & 5 bits & 1.000& 0.995& 1.000& 0.995& 0.990& 0.970& 0.997& 0.810& 1.000& 0.827& 0.999& 0.716\\
    & 4 bits & 1.000& 0.995& 1.000& 0.995& 0.989& 0.969& 0.997& 0.798& 1.000& 0.812& 0.997& 0.711\\
    & 3 bits & 1.000& 0.995& 0.997& 0.993& 0.999& 0.966& 0.787& 0.642& 0.991& 0.733& 0.886& 0.642\\
    & 2 bits & 0.100& 0.114& 0.100& 0.114& 0.884& 0.898& 0.100& 0.100& 0.100& 0.100& 0.000& 0.100\\
    \hline
    \hline
    \end{tabular}
    \vspace{-5pt}
\end{table*}

For the FL models trained on IID and dn-IID datasets, the accuracies on the trigger set basically do not change. This is probably because the IID and dn-IID datasets are more manageable tasks for FL~\cite{li2021federated}, and the models have already converged to a reasonable minimum. However, the existing FedAvg algorithm does not perform too well in the pn-IID settings but still acceptable.

\subsubsection{Pruning Attack}
We evaluate the robustness of our watermark against pruning attacks, in particular against parameter pruning~\cite{han2015learning} as conducted in~\cite{uchida2017embedding}. In parameter pruning, parameters close to zero are set to zero, which can intentionally or unintentionally affect the effectiveness of the watermark.

In Table~\ref{Table:Pruning}, we show the pruning attack results for all twelve models and for different percentage levels which indicate how many parameters are pruned. The obtained accuracies on the trigger set $D_s$ are remarkable in almost every task. Most accuracies are above $85\%$, indicating an impressive ability to resist pruning attack. When pruning $50\%$ parameters of VGG trained on the pn-IID MNIST dataset, the accuracy of $D_s$ drops to $0.191\%$. However, at the same time, the accuracy of the benign dataset $D_b$ also drops to about $20\%$, which means the attacked model has lost its value and that therefore we do not regard it as a successful attack.

\begin{figure}[t]
    \centering
    \includegraphics[width=0.95\columnwidth]{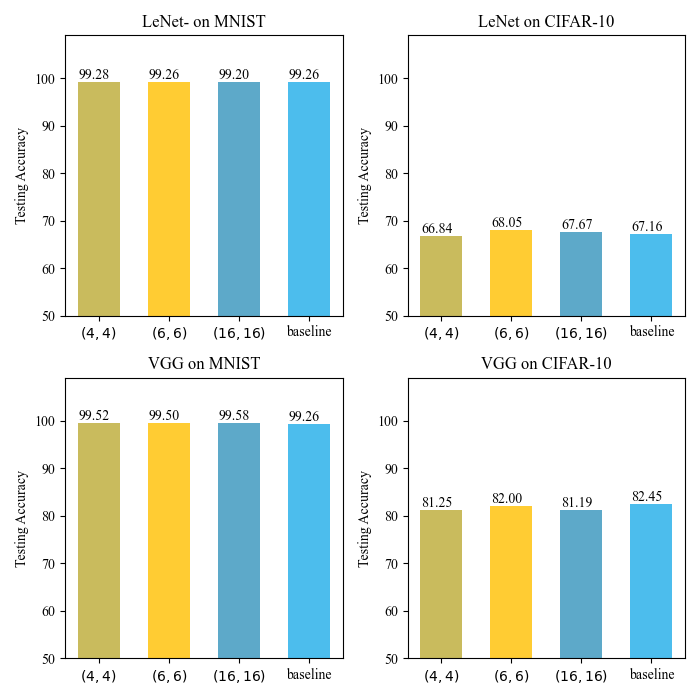}
    \caption{Function preservation results for different patch parameter settings.}
    \label{Fig:DPPFunction-preserving}
    \vspace{-10pt}
\end{figure}

\begin{figure}[t]
    \centering
    \includegraphics[width=1.0\columnwidth]{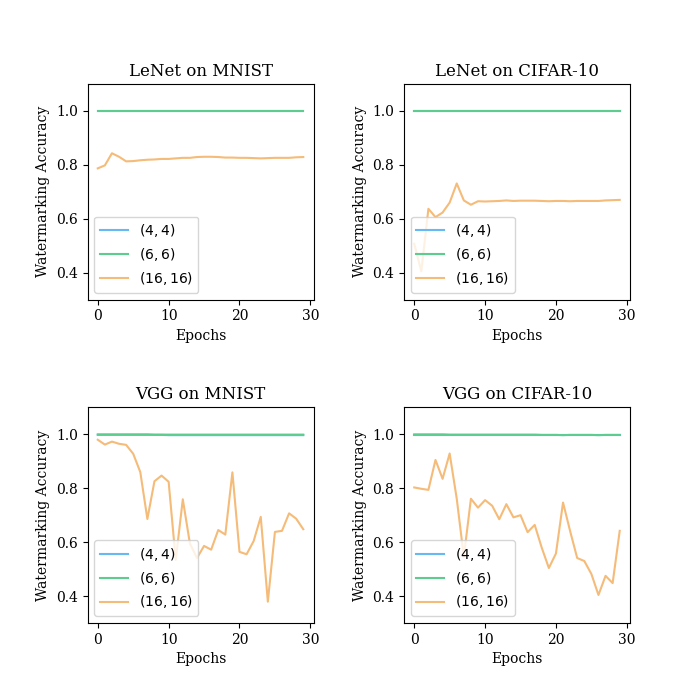}
    \caption{Robustness results of models under fine-tuning attack for different patch parameter settings}
    \label{Fig:DPPfine-tuningattack}
    \vspace{-10pt}
\end{figure}

\begin{table}[t]
    \caption{Robustness of watermark against PST attack.}
    \label{Table:PSTAttack}
    \centering
    \begin{tabular}{l|ccc|ccc}
    \hline
    \hline
    & \multicolumn{3}{c|}{MNIST} & \multicolumn{3}{c}{CIFAR10} \\
    & IID & dn-IID & pn-IID & IID & dn-IID & pn-IID\\
    \hline
    \hline
    LeNet & 0.848 & 0.901 & 0.777 & 0.621 & 0.611 & 0.853\\
    VGG & 0.989 & 0.823 & 0.823 & 0.779 & 0.957 & 0.615\\
    \hline
    \hline
    \end{tabular}
    \vspace{-10pt}
\end{table}

\subsubsection{Quantisation Attack}
In our quantisation attack, we adopt a straightforward method to reduce the weights' bit depth. For each layer in the neural network, we first find the maximum and minimum of the parameters and uniformly divide this interval, and then round parameters to the nearest value. 

We report the accuracy of $D_s$ with $7$ different bit depths in Table~\ref{Table:Quantisation}. As we can see, when the models are quantised to more than 2 bits, the watermarked models preserve both their primitive functionality and the watermark to a good extent, with accuracies on $D_s$ of at least $78.7\%$ and in most cases much higher. For extreme quantisation to 2 bits, some models, especially VGG, lose the ability to classify both $D_b$ and $D_s$, which, as discussed in Section~\ref{Sec:ThreatModel}, would be considered an unsuccessful attack.

\subsubsection{PST Attack}
\label{Sec:PSTAttack}
We implement a PST attack with the same parameters used in~\cite{guo2021fine} and report the results in Table~\ref{Table:PSTAttack}. Only for three of the twelve models, the accuracy drop to below $75\%$, although still above $60\%$. Considering the difficulty of learning pn-IID datasets, this drop might be because of the inadequate capability of LeNet for CIFAR10 or a bad local minimum. The results indicate that PST attacks can have an impact on our watermark to some extent, but are not consistently effective against our trigger set.

\subsection{Evaluation of Different Patch Parameters}
\label{Sec:DifferentPatchParameters}
As discussed in Section~\ref{Sec:TriggerSetConstruction}, the patch parameters can affect the security and effectiveness of our watermarking scheme. We conduct some experiments where we investigate three different settings, $\mu=\nu=4$, $\mu=\nu=6$, and $\mu=\nu=16$, giving corresponding patch sizes of $8\times8$, $5\times5$, and $2\times2$. The $2\times2$ is the minimum patch size to generate sufficient trigger set images. For simplicity, we use $(x, x)$ to represents models with patch parameters $\mu=\nu=x$. Since data distribution is not considered in this section's experiments, the models are trained following the IID setting.

\subsubsection{Effectiveness and Function Preservation}
First, we evaluate the effectiveness and function preservation properties for the three parameter settings and give the results in Table~\ref{Table:DPPEffectiveness} and Fig.~\ref{Fig:DPPFunction-preserving}. From Table \ref{Table:DPPEffectiveness}, although the accuracies are all greater than or close to $99\%$, we can see a slight difference for $(16, 16)$ model. This might be due to overfitting of the $2\times2$ patches. Fig.~\ref{Fig:DPPFunction-preserving} also suggests the same drop. The $(6, 6)$ model yields the best performance, and the $(16, 16)$ model the worst.

\begin{table}[h]
    \caption{Watermarking accuracy for different patch parameter settings}
    \label{Table:DPPEffectiveness}
    \centering
    \begin{tabular}{l|ccc|ccc}
    \hline
    \hline
    & \multicolumn{3}{c|}{MNIST} & \multicolumn{3}{c}{CIFAR10} \\
     & $\mu$=$\nu$=4 & $\mu$=$\nu$=6 & $\mu$=$\nu$=16 & $\mu$=$\nu$=4 & $\mu$=$\nu$=6 & $\mu$=$\nu$=16 \\
    \hline
    \hline
    LeNet & 1.000 & 1.000 & 0.999 & 0.998 & 1.000 & 0.996\\
    VGG & 1.000 & 1.000 & 0.989 & 0.997 & 1.000 & 0.995\\
    \hline
    \hline
    \end{tabular}
    \vspace{-10pt}
\end{table}

\subsubsection{Robustness}
As before, we conduct fine-tuning, pruning, quantisation, and PST attacks, and show the obtained results in Fig.~\ref{Fig:DPPfine-tuningattack}, Table~\ref{Table:DPPPruning}, Table~\ref{Table:DPPQuantisation}, and Table~\ref{Table:DPPPST}, respectively.

\begin{table*}[t]
\caption{Watermarking (wm) and test accuracy against pruning attack for different patch parameter settings.}
\label{Table:DPPPruning}
\centering
\begin{tabular}{lc|cc|cc|cc||cc|cc|cc}
\hline
\hline
& & \multicolumn{6}{c||}{MNIST} & \multicolumn{6}{c}{CIFAR10} \\
&  & \multicolumn{2}{c|}{$\mu$=$\nu$=4} & \multicolumn{2}{c|}{$\mu$=$\nu$=6} & \multicolumn{2}{c||}{$\mu$=$\nu$=16} & \multicolumn{2}{c|}{$\mu$=$\nu$=4} & \multicolumn{2}{c|}{$\mu$=$\nu$=6} & \multicolumn{2}{c}{$\mu$=$\nu$=16}\\
& pruning rate & wm & test & wm & test & wm & test & wm & test & wm & test & wm & test\\
\hline
\hline
\multirow{10}{*}{LeNet} & 5\% & 1.000& 0.993& 1.000& 0.993& 0.923& 0.992& 0.998& 0.668& 1.000& 0.680& 0.990& 0.677\\
& 10\% & 1.000& 0.993& 1.000& 0.993& 0.923& 0.992& 0.998& 0.667& 1.000& 0.680& 0.990& 0.677\\
& 15\% & 1.000& 0.993& 1.000& 0.993& 0.923& 0.992& 0.998& 0.666& 1.000& 0.680& 0.990& 0.677\\
& 20\% & 1.000& 0.993& 1.000& 0.993& 0.922& 0.992& 0.996& 0.667& 1.000& 0.677& 0.988& 0.674\\
& 25\% & 1.000& 0.993& 1.000& 0.993& 0.922& 0.992& 0.998& 0.663& 1.000& 0.678& 0.988& 0.672\\
& 30\% & 1.000& 0.993& 0.999& 0.993& 0.919& 0.992& 0.998& 0.663& 1.000& 0.673& 0.989& 0.669\\
& 35\% & 1.000& 0.992& 0.996& 0.993& 0.922& 0.992& 0.998& 0.660& 0.999& 0.670& 0.984& 0.665\\
& 40\% & 1.000& 0.992& 0.996& 0.993& 0.922& 0.992& 0.993& 0.656& 0.998& 0.666& 0.981& 0.660\\
& 45\% & 1.000& 0.992& 0.995& 0.992& 0.929& 0.991& 0.988& 0.651& 0.995& 0.660& 0.974& 0.652\\
& 50\% & 1.000& 0.992& 0.997& 0.992& 0.930& 0.991& 0.981& 0.641& 0.984& 0.646& 0.950& 0.648\\
\hline
\multirow{10}{*}{VGG} & 5\% & 1.000& 0.995& 1.000& 0.995& 0.966& 0.996& 0.997& 0.812& 1.000& 0.820& 0.999& 0.813\\
& 10\% & 1.000& 0.995& 1.000& 0.995& 0.966& 0.996& 0.997& 0.812& 1.000& 0.821& 1.000& 0.813\\
& 15\% & 1.000& 0.995& 1.000& 0.995& 0.966& 0.996& 0.997& 0.811& 1.000& 0.819& 0.999& 0.811\\
& 20\% & 1.000& 0.995& 1.000& 0.995& 0.966& 0.996& 0.997& 0.810& 1.000& 0.819& 1.000& 0.810\\
& 25\% & 1.000& 0.995& 1.000& 0.995& 0.965& 0.996& 0.997& 0.808& 1.000& 0.820& 0.999& 0.808\\
& 30\% & 1.000& 0.995& 1.000& 0.995& 0.964& 0.996& 0.996& 0.801& 1.000& 0.812& 0.999& 0.800\\
& 35\% & 1.000& 0.995& 0.999& 0.994& 0.965& 0.996& 0.994& 0.791& 0.999& 0.804& 0.999& 0.781\\
& 40\% & 1.000& 0.995& 0.999& 0.994& 0.968& 0.996& 0.991& 0.773& 0.999& 0.798& 0.993& 0.753\\
& 45\% & 1.000& 0.994& 0.999& 0.994& 0.967& 0.995& 0.961& 0.743& 0.989& 0.770& 0.985& 0.699\\
& 50\% & 1.000& 0.993& 0.999& 0.991& 0.959& 0.994& 0.879& 0.683& 0.965& 0.727& 0.900& 0.620\\
\hline
\hline
\end{tabular}
\vspace{-5pt}
\end{table*}

\begin{table*}[t]
\caption{Watermarking (wm) and test accuracy against quantisation attack for different patch parameter settings.}
\label{Table:DPPQuantisation}
\centering
\begin{tabular}{lc|cc|cc|cc||cc|cc|cc}
\hline
\hline
& & \multicolumn{6}{c||}{MNIST} & \multicolumn{6}{c}{CIFAR10} \\
& & \multicolumn{2}{c|}{$\mu$=$\nu$=4} & \multicolumn{2}{c|}{$\mu$=$\nu$=6} & \multicolumn{2}{c||}{$\mu$=$\nu$=16} & \multicolumn{2}{c|}{$\mu$=$\nu$=4} & \multicolumn{2}{c|}{$\mu$=$\nu$=6} & \multicolumn{2}{c}{$\mu$=$\nu$=16}\\
& \# bits & wm & test & wm & test & wm & test & wm & test & wm & test & wm & test\\
\hline
\hline
\multirow{7}{*}{LeNet} & 8 bits & 1.000& 0.993& 1.000& 0.993& 0.968& 0.992& 0.998& 0.667& 1.000& 0.680& 0.998& 0.678\\
& 7 bits & 1.000& 0.993& 1.000& 0.993& 0.969& 0.992& 0.998& 0.668& 1.000& 0.679& 0.997& 0.677\\
& 6 bits & 1.000& 0.993& 1.000& 0.993& 0.967& 0.992& 0.996& 0.666& 1.000& 0.679& 0.998& 0.675\\
& 5 bits & 1.000& 0.993& 0.999& 0.993& 0.967& 0.992& 0.992& 0.667& 1.000& 0.677& 0.995& 0.674\\
& 4 bits & 1.000& 0.993& 0.998& 0.993& 0.960& 0.992& 0.999& 0.656& 1.000& 0.666& 0.991& 0.665\\
& 3 bits & 0.988& 0.990& 0.996& 0.991& 0.960& 0.991& 0.949& 0.632& 0.850& 0.629& 0.468& 0.451\\
& 2 bits & 0.999& 0.980& 0.860& 0.980& 0.864& 0.846& 0.140& 0.167& 0.100& 0.100& 0.105& 0.136\\
\hline
\multirow{7}{*}{VGG} & 8 bits & 1.000& 0.993& 1.000& 0.995& 0.966& 0.996& 0.997& 0.813& 1.000& 0.820& 1.000& 0.812\\
& 7 bits & 1.000& 0.993& 1.000& 0.995& 0.966& 0.996& 0.997& 0.812& 1.000& 0.820& 1.000& 0.812\\
& 6 bits & 1.000& 0.993& 1.000& 0.995& 0.966& 0.996& 0.997& 0.813& 1.000& 0.820& 0.999& 0.810\\
& 5 bits & 1.000& 0.993& 1.000& 0.995& 0.965& 0.996& 0.997& 0.810& 1.000& 0.815& 1.000& 0.802\\
& 4 bits & 1.000& 0.993& 1.000& 0.995& 0.965& 0.996& 0.997& 0.798& 1.000& 0.802& 0.998& 0.800\\
& 3 bits & 0.988& 0.990& 1.000& 0.994& 0.964& 0.995& 0.787& 0.642& 0.845& 0.495& 0.990& 0.596\\
& 2 bits & 0.999& 0.980& 0.100& 0.114& 0.100& 0.114& 0.100& 0.100& 0.100& 0.100& 0.100& 0.100\\
\hline
\hline
\end{tabular}
\vspace{-5pt}
\end{table*}

\begin{table}[t]
\caption{Robustness of watermark against PST Attack for different patch parameter settings.}
\label{Table:DPPPST}
\centering
\begin{tabular}{l|ccc|ccc}
\hline
\hline
& \multicolumn{3}{c|}{MNIST} & \multicolumn{3}{c}{CIFAR10} \\
 & $\mu$=$\nu$=4 & $\mu$=$\nu$=6 & $\mu$=$\nu$=16 & $\mu$=$\nu$=4 & $\mu$=$\nu$=6 & $\mu$=$\nu$=16 \\
\hline
\hline
LeNet & 0.848 & 0.947 & 0.676 & 0.621 & 0.691 & 0.413\\
VGG & 0.989 & 0.804 & 0.544 & 0.779 & 0.779 & 0.529\\
\hline
\hline
\end{tabular}
\vspace{-10pt}
\end{table}

For fine-tuning attacks, from Fig.~\ref{Fig:DPPfine-tuningattack} we can notice a significant impact of overfitting on the watermark. While the $(4, 4)$ and $(6, 6)$ models perform well, maintaining a high accuracy on the trigger set, the accuracy of the $(16,16)$ models drops quite severely. The $2\times2$ patches here are so small that the models have to overfit to classify them, thus affecting the accuracy on the trigger set.

For the pruning and quantisation attacks, Tables~\ref{Table:DPPPruning} and~\ref{Table:DPPQuantisation} show that $(4, 4)$ and $(6, 6)$ provide similarly high robustness, while for $(16, 16)$ we can notice a $5\%$ drop in accuracy. Since these attacks do not significantly alter the decision boundaries of DL models, the adverse effect of overfitting is not evident here.

For the PST attack, from Table~\ref{Table:DPPPST} it is evident that the $(16, 16)$ models cannot resist this type of attack, with accuracies on the trigger set declined by more than $20\%$. The median filter and affine transformations used in PST attacks, lead to severe damage to the small patches because small patches are more easily filtered and transformed. Both the $(4, 4)$ and $(6, 6)$ models exhibit good robustness against PST attacks.

Overall, although $(16, 16)$ gives the largest secret key space, this setting leads to robustness deficiencies since the resulting patch size is too small. $(4, 4)$ and $(6, 6)$ yield similar results in the robustness experiments, and thus they are recommended in our tasks.

\section{Conclusions}
\label{Sec:Conclusions}
In this paper, we have proposed a novel client-side watermarking scheme for homomorphic-encryption-based secure federated learning. To our best knowledge, this is the first scheme to embed the watermark to models in a secure FL environment. The advantages of our scheme are as follows: (1) Using the gradient enhancement method, a client side can embed backdoor-based watermark into the secure FL model; (2) The proposed non-ambiguous trigger set construction mechanism means that an adversary cannot forge the watermark and claim the copyright of the model; (3) The proposed gradient-enhanced watermark embedding method tackles the issue of slim effects of single client on watermark embedding in the FL environment; (4) Using our proposed scheme, the FL model meets the requirements of effectiveness, function preservation, low false positive rate, and resistance to typical watermark removal attacks. In future work, we plan to deploy our watermarking framework into real-life secure FL applications, such as hospitals and credit systems.


%

\ifCLASSOPTIONcompsoc
  \section*{Acknowledgments}
\else
  \section*{Acknowledgment}
\fi

This research is supported by the National Natural Science Foundation of China (61602527,U1734208), Natural Science Foundation of Hunan Province, China (2020JJ4746), and in part by the High Performance Computing Center of Central South University.

\ifCLASSOPTIONcaptionsoff
  \newpage
\fi



%

\bibliography{reference}

\end{document}